\def\BibTeX{{\rm B\kern-.05em{\sc i\kern-.025em b}\kern-.08em
    T\kern-.1667em\lower.7ex\hbox{E}\kern-.125emX}}
\newtheorem{lemma}{Lemma}
\newtheorem{remark}{Remark}
\newtheorem{auxiliary code}{Auxiliary Code}
\algrenewcommand\algorithmicindent{0.8em}%
\newcolumntype{L}{>{\centering\arraybackslash}m{2.5cm}}    
\newcolumntype{M}{>{\centering\arraybackslash}m{4cm}} 
\newcolumntype{N}{>{\centering\arraybackslash}m{5cm}} 
\newcolumntype{O}{>{\centering\arraybackslash}m{5cm}} 
\def \codename {SEF } %
\def \tT {\mathcal{T} }
\def \tsymbol {\tau}
\def \tproof {{\fontfamily{qcr}\selectfont Proof}}
\def \troot {{\fontfamily{qcr}\selectfont Root}}
\def \tverify {{\fontfamily{qcr}\selectfont Verify-Inclusion}}
\def \talpha {\alpha_{\min}}
\def \tP {\mathbf{F}}
\def \ttT {\mathbf{T}}
\def \tFG {\mathcal{G}}
\def \tI {\mathcal{A}}
\def \tF {\mathcal{F}}
\def \ttN {\overline{C}}
\def \tSS {\Psi}
\def \tssingle {\psi}
\def \tssingletree {ST}
\def \tLS {{\fontfamily{qcr}\selectfont Leaf-Set}}
\def \tlast {\mu}
\def \tlastVN {\mathcal{V}}
\def \tttN {C}
\def \tttC {\text{{\fontfamily{qcr}\selectfont CodeSym}}}
\def \tttS {\text{{\fontfamily{qcr}\selectfont data}}}
\def \tttP {\text{{\fontfamily{qcr}\selectfont parity}}}
\def \tparity {{\fontfamily{qcr}\selectfont encodeParity}}
\def \tparent {{\fontfamily{qcr}\selectfont formParent}}
\def \tdecodelayer {\text{{\fontfamily{qcr}\selectfont decodeLayer}}}
\def \tdropped {\text{{\fontfamily{qcr}\selectfont dropped}}}
\def \tfrozen {\text{{\fontfamily{qcr}\selectfont frozen}}}
\def \tIm {\widehat{\mathcal{A}}}
\def \tFm {\widehat{\mathcal{F}}} 
\def \tNm {\widehat{N}}  
\def \tkm {\widehat{K}} 
\def\closed#1#2#3{#1 \in [#2,#3]}
\def\openleft#1#2#3{#1 \in (#2,#3]}
\def\closedN#1#2{#1 \in [#2]}
\newcommand\deb[1]{{\color{black}#1}}  \newcommand\lev[1]{{\color{black}#1}}  
\newcommand\thickbar[1]{\accentset{\rule{.3em}{.6pt}}{#1}}
\newcommand\revision[1]{{\color{black}#1}}
\newcommand\trim[1]{{\color{black}#1}}
\begin{document}

\title{Polar Coded Merkle Tree: Improved Detection of Data Availability Attacks in Blockchain Systems\vspace{-0.4cm}}

\author{\IEEEauthorblockN{Debarnab Mitra, Lev Tauz, and Lara Dolecek}
\IEEEauthorblockA{Department of Electrical and Computer Engineering, University of California, Los Angeles, USA\\
email: debarnabucla@ucla.edu, levtauz@ucla.edu, dolecek@ee.ucla.edu}
\vspace{-1.12cm}}

\maketitle

\begin{abstract}
    \deb{
    \trim{Light nodes in blockchain systems are known to be vulnerable to \textit{data availability} (DA) attacks} where they accept an invalid block with unavailable portions.  
    \revision{Previous works have used LDPC and 2-D Reed Solomon (2D-RS) codes with Merkle Trees to mitigate DA attacks. While these codes have demonstrated improved performance across a variety of metrics such as DA detection probability, they are difficult to apply to blockchains with large blocks due to generally intractable code guarantees for large codelengths (LDPC), large decoding complexity (2D-RS), or large coding fraud proof sizes (2D-RS).}
    We address these issues by proposing the novel \emph{Polar Coded Merkle Tree} (PCMT) which is a Merkle Tree built from the encoding graphs of polar codes and a specialized polar code construction called \emph{Sampling-Efficient Freezing} (SEF). We demonstrate that the PCMT with SEF polar codes performs well in detecting DA attacks for large block sizes.
    }

\end{abstract}

\vspace*{-0.3cm}
\section{Introduction}
\vspace{-0.18cm}
Decentralization and security properties of blockchains have led to their applications in
\trim{a wide variety of fields \cite{Bitcoin,ethereum,AI-1,AI-2,supplychain,IOT,healthcare}.}
A blockchain is an immutable ledger of transaction blocks. 
\textit{Full nodes} in a blockchain system store the entire ledger and validate transactions. However, for better scalability, blockchains also run \emph{light nodes} \trim{who} store the header 
of each block %
and cannot validate transactions. Light nodes rely on honest full nodes for \emph{fraud proofs} \cite{dataAvailOrg} in order to reject invalid blocks. 

Blockchains where light nodes are connected to a majority of malicious full nodes are vulnerable to \emph{data availability (DA) attacks} \cite{dataAvailOrg, CMT}. In this attack, a malicious full node \trim{(i.e., an adversary)} generates a block with invalid transactions and hides the invalid portion of the block from other  nodes. This action prevents honest nodes from sending fraud proofs to the light nodes. Light nodes, in this scenario, randomly request/sample chunks of the block from \trim{the block generator and detect} a DA attack \trim{if any request} is rejected. 
\trim{To improve the detection of a DA attack by light nodes, erasure coding has been proposed to encode the block \cite{dataAvailOrg} which forces the adversary to hide a larger fraction of the encoded block.}
However, erasure coding allows the adversary to carry out an \emph{incorrect-coding (IC) attack} by incorrectly generating the coded block, in which case honest full nodes can send an \emph{IC proof} to the light nodes to reject the block \cite{dataAvailOrg}, \cite{CMT}.
 \revision{Recently, \cite{InformationDispersal} proposed a technique to mitigate DA attacks without requiring IC-proofs, however \cite{InformationDispersal} employs complex cryptographic computations.}
We remark that channel coding has been considered to mitigate  a variety of issues in blockchains
\cite{SSskewITW,SSskew-full,cover,erasurelowstorage,networkcodingstorage,downsampling,patternederasure,SeF,secureraptor,secregenerating,networkcodedPBFT,polyshard,crossshard,AceD,DE-PEG}.

\deb{ 
The first paper to address DA attacks  via coding used 2D Reed-Solomon (RS) codes to encode each block \cite{dataAvailOrg}.  \revision{Their method was recently optimized in \cite{RSoptimize}}. While offering a high probability of detecting DA attacks, 2D-RS codes result in large IC proof sizes and decoding complexity. \revision{Large IC-proof sizes can be exploited by the adversary to congest the network and cause issues such as denial of service attacks, reduced transaction throughput, etc..}
To improve upon \revision{2D-RS codes}, authors in \cite{CMT} proposed the \emph{Coded Merkle Tree (CMT)}: a Merkle tree \cite{Bitcoin} where each layer is encoded using a Low-Density Parity-Check (LDPC) code. LDPC codes reduce the IC proof size compared to 2D-RS codes due to their sparse parity check equations.  LDPC codes also allow the use of a low complexity peeling decoder \cite{ModernCodingTheory} for decoding the CMT where the probability of failure to detect DA attacks depends on the minimum stopping set size of the LDPC code \cite{CMT}.}

\lev{Application of coding has to be carefully considered depending on the size of the transaction blocks in blockchains which can range from a few MBs (small block size), e.g., Bitcoin \cite{BitcoinSize}, Bitcoin Cash\cite{BitcoinCash}, to hundreds of MBs (large block size), e.g., Bitcoin SV\cite{BitcoinSV}. For large block sizes, large code lengths are required since large code lengths allow for smaller partitioning of the block, thereby reducing the network bandwidth requirement.
In regard to the CMT, previous work in \cite{CMT} has considered random LDPC codes for large code lengths. However, due to the random construction, the approach has a non-negligible probability of generating bad codes \cite{CMT} which undermines the security of the system. At the same time, works in \cite{SSskewITW,SSskew-full} have provided deterministic LDPC codes with short code lengths based on the PEG algorithm \cite{PEG} that result in a low probability of failure. 
However, the NP-hardness of determining the minimum stopping set size of  LDPC codes \cite{SSNP-hard} makes it difficult to provide an efficiently computable guarantee on the probability of failure for large code lengths. 
As such, designing a CMT with LDPC codes at large code lengths/block sizes is difficult.
To mitigate these issues, we propose \emph{Polar Coded Merkle Tree (PCMT)}: a CMT built using the encoding graph of polar codes \cite{PolarCodesErikan} (we refer to a CMT built using LDPC codes as an LCMT). Although polar codes have dense parity check matrices \cite{polarhighdensity}, they have sparse encoding graphs, which  result in a small IC proof size in the PCMT. We provide a specialized  polar code construction called \emph{Sampling-Efficient Freezing (SEF) Algorithm} that i) provides a low probability of failure to detect DA attacks;  ii) allows flexibility in designing polar codes of any lengths. 
We demonstrate that \codename Polar codes have an efficiently computable
 guarantee on the probability of failure which simplifies system design at large block sizes. We then demonstrate that for large block sizes, a PCMT built using \codename Polar codes result in a lower probability of failure compared to LDPC codes designed by the PEG algorithm. }

The rest of this paper is organized as follows. In Section \ref{sec:prelims}, we provide the preliminaries and the system model. In Section \ref{sec:PCMT_construction}, we provide our construction method for the PCMT. We present the SEF algorithm in Section \ref{sec:SEF_algo}. Finally, we provide \deb{simulation results and performance comparisons in Section \ref{sec:sims}.}%

\vspace{-0.25cm}
\section{Preliminaries and System Model}\label{sec:prelims}

\textbf{Notation:} Let $\tP_2 = \begin{bmatrix}
1 & 0 \\
1 & 1 
\end{bmatrix}$ and $\ttT_2 = \begin{bmatrix}
1 \\
2 
\end{bmatrix}$. For a vector $\mathbf{a}$, let $\mathbf{a}(i)$  and $\min(\mathbf{a};k)$ denote the $i$th element and $k$th smallest value of $\mathbf{a}$, respectively.  Let $\otimes n$ denote the $n$th
Kronecker power. 
Let $\vert S \vert$ be the cardinality of set $S$. All logarithms are with base 2. For integers $a$ and $b$ \trim{define $[a,b] = \{i \;\vert\; a \leq i \leq b\}$, $(a,b] = \{i \;\vert\; a < i \leq b\}$, and $[a] = \{i \;\vert\; 1 \leq i \leq a\}$, where elements in the three sets are integers.}

\subsubsection{Coded Merke Tree preliminaries}
A CMT, like a regular Merkle Tree \cite{Bitcoin}, is a cryptographic commitment generator that
is used to check the integrity of transactions in the block \cite{CMT}. Additionally, erasure coding allows to check for data availability 
via random sampling. In this section, we provide a general framework for the CMT construction that captures its key properties. Later in Section \ref{sec:PCMT_construction}, we present the PCMT within the general CMT framework.

A CMT parametrized by $\tT = (k, R, q , l)$ is a coded version of a Merkle tree \cite{Bitcoin}. It has $(l+1)$ layers $L_0,L_1, \ldots, L_l$ where $L_l$ is the base layer and $L_0$ is the CMT root. $L_j$ has $N_j  = \frac{N_l}{(qR)^{l-j}}$ coded symbols, where 
$N_l = \frac{k}{R}$. 
Let $\tttC[j] = \{\tttN_j[i] \;\vert\; \closedN{i}{N_j}\}$ be the coded symbols of $L_j$, where $\tttS[j] = \{\tttN_j[i] \; \vert\; \closedN{i}{RN_j}\}$ and $\tttP[j] = \{\tttN_j[i] \;\vert\; \openleft{i}{RN_j}{N_j}\}$ are the set of data and parity symbols of $L_j$, respectively. The coded symbols in the CMT are formed as follows:
  Set the data symbols of the base layer $\tttS[l]$ to the chunks of the transaction block. For $j = l, l-1, \ldots, 1$: \emph{a)} form the parity symbols $\tttP[j]$ from the data symbols $\tttS[j]$ using a rate $R$ systematic linear code  via a procedure $\tttP[j] = \text{\tparity}(\tttS[j])$; \emph{b)} form the data symbols $\tttS[j-1]$ from the coded symbols $\tttC[j]$ by a procedure  $\tttS[j-1] = \text{\tparent}(\tttC[j])$. 
 
The \tparent$()$ procedure has the property that each data symbol $\tttN_{j-1}[i]$ in $\tttS[j-1]$ contains the hashes of $q$ coded symbols of $\tttC[j]$.  Finally, $\tttS[0]$ forms the root {\troot } of the CMT and is the commitment to the block.

Each coded symbol $\tsymbol$ in the CMT has a Merkle proof \tproof($\tsymbol$) which can be used to check the integrity of the symbol given the {\troot} using \tverify($\tsymbol$,  \tproof($\tsymbol$), \troot).
The CMT is decoded using a hash-aware decoder which decodes the tree from the root to the base layer. Each layer is decoded using a procedure $\tdecodelayer(L_j)$ and the hash of the decoded symbols are matched with their hash provided in the parent layer of the CMT. Assume that the decoded CMT symbols  $\tsymbol_1, \tsymbol_2, \ldots, \tsymbol_{d}$ satisfy a degree $d$ parity check equation (of the erasure code used for encoding). Of these symbols, if there exists a symbol $\tsymbol_e$ whose hash does not match with the hash provided by the parent of $\tsymbol_e$ in the CMT, an IC attack is detected. In this case, an IC proof consists of the following:  the symbols $\{\tsymbol_1, \tsymbol_2, \ldots, \tsymbol_{d}\} \setminus\tsymbol_e $ and their Merkle proofs, and the Merkle proof of $\tsymbol_e$. 
 The IC proof is verified by first verifying that each symbol $\tsymbol_i$, $1 \leq i \leq d, i \neq e$, satisfies \tverify($\tsymbol_i$,  \tproof($\tsymbol_i$), \troot), then decoding ${\tsymbol}_e$ from the remaining symbols (as they form a parity check equation) and then checking that  ${\tsymbol}_e$ does not satisfy \tverify(${\tsymbol}_e$,  \tproof(${\tsymbol}_e$), \troot).

We consider a blockchain system similar to \cite{CMT,SSskew-full} with full nodes and light nodes where full nodes produce new blocks. Light nodes only store the CMT root of each block and use it to verify the Merkle proof of CMT symbols and IC proofs. Similar to \cite{dataAvailOrg,CMT,SSskewITW,SSskew-full}, \trim{we assume that light nodes are honest, are connected to at least one honest full node, but can be connected to a majority of malicious full nodes.} For the purposes of a DA attack, consider one layer of the CMT  having $N$ coded symbols. A malicious full node causes a DA attack by \emph{a)} generating an invalid block and producing its CMT; \emph{b)} hiding coded symbols (of the $N$ coded symbols) such that no honest full node is able to decode back all the coded symbols. Light nodes detect this DA attack by anonymously  and randomly requesting (sampling) a small number of coded symbols from the block producer and accepting  the block if all the requested samples are returned. 
A malicious node only returns coded symbols that it has not hidden \cite{dataAvailOrg,CMT,SSskewITW,SSskew-full}. Let $\talpha$, which we call the \emph{undecodable threshold}, be the minimum number of coded symbols that a malicious node must hide to prevent honest full nodes from decoding all the coded symbols. Then, the probability of failure for \revision{a light node} to detect a DA attack using $s$ random i.i.d. samples is $P_f(s) = (1 - \frac{\talpha}{N})^s$. \revision{Note that we focus on the security of the system on a per client basis similar to \cite{dataAvailOrg},\cite{CMT}.}
The following metrics are of importance for a CMT: 
i) IC proof size (must be small in comparison to the original block size \revision{since this proof is communicated to all light nodes and can be used to congest the network}), ii) undecodable threshold, iii) complexity of computing the undecodable threshold (\deb{which is important at large CMT code length $N$}), and iv) decoding complexity.
In this paper, we will demonstrate a construction of CMT using polar codes called the PCMT that performs well on all these metrics \deb{when the size of the block $b$ is large}.

\begin{figure}[t]
    \centering
    \begin{subfigure}{0.5\linewidth}
\begin{minipage}{0.99\linewidth}
 \centering
     \includegraphics[scale=0.25]{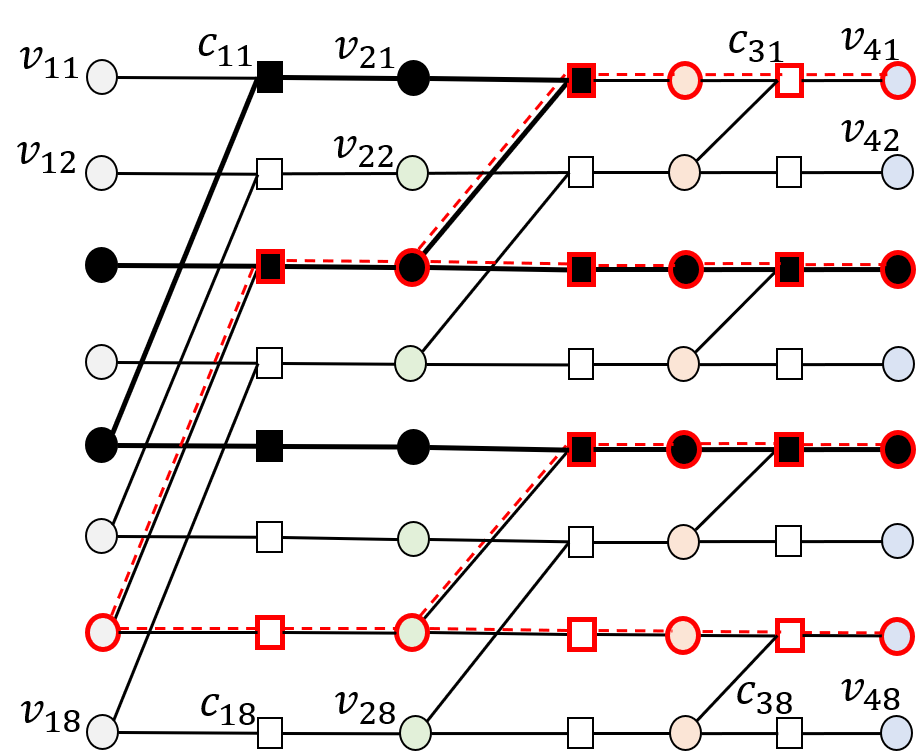}
 \end{minipage}
    \end{subfigure}%
    \begin{subfigure}{0.5\linewidth}
\begin{minipage}{0.99\linewidth}
\centering
     \includegraphics[scale=0.25]{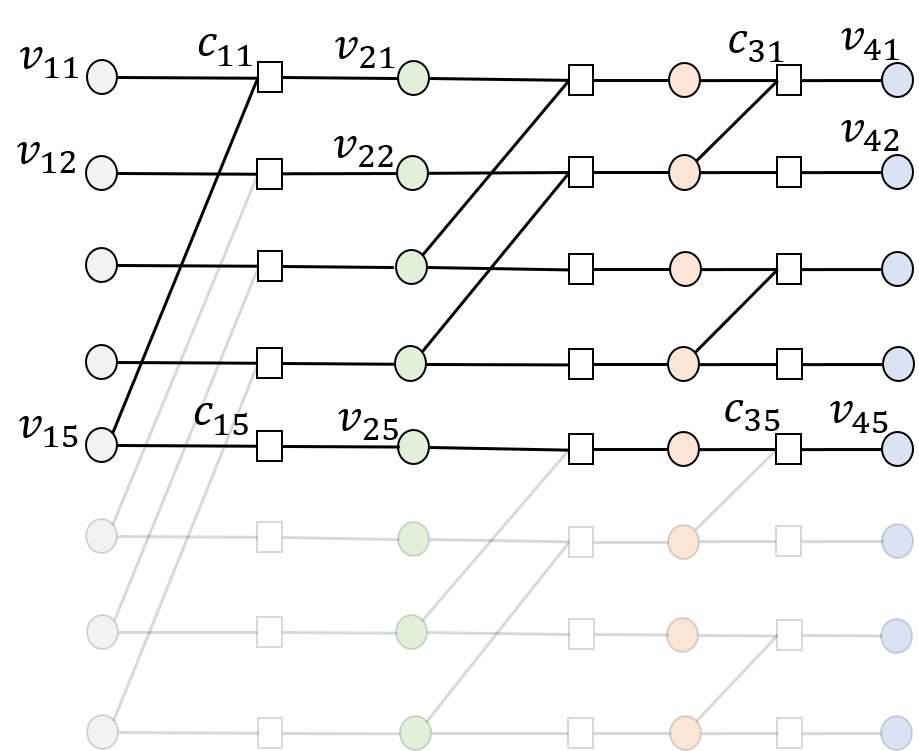}
 \end{minipage}
    \end{subfigure}
    \caption{\footnotesize Left panel: FG $\tFG_{8}$ where circles represent VNs and squares represent CNs.  The black (red) VNs and CNs represent a stopping set (stopping tree); Right panel: $\tFG_5$ obtained by removing the VNs from the last 3 rows of $\tFG_8$. 
   (removed VNs are shown in low opacity). 
    }
    \label{fig:Factor_graph}
\end{figure}

\subsubsection{Polar Codes preliminaries}
An ($\tNm,\tkm$) polar code of codelength $\tNm = 2^n$ for some integer $n$ and information length $\tkm$ is defined by a transformation matrix $\tP_{2^n} = \tP^{\otimes n}_2$.
The generator matrix of the  polar code is a submatrix of $\tP_{2^n}$ having $\tkm$ of its rows corresponding to the data (information) symbols, while the rest of the rows correspond to frozen symbols (zero chunks in this paper).  The \revision{factor graph} (FG) representation \cite{Polar-SStree-TCOM}  of $\tP_{2^3}$ is shown in Fig. \ref{fig:Factor_graph} left panel. In general, the FG of $\tP_{2^n}$ (denoted by $\tFG_{2^n}$) has $n+1$ variable node (VN) and $n$ check node (CN) columns. Let $v_{ki}$ ($c_{ki}$) denote the VN (CN) in the $k$th column and $i$th row as shown in Fig. \ref{fig:Factor_graph}. Note that the CNs have a degree of either 2 or 3.

\begin{figure}[t]
    \centering
    \includegraphics[scale=0.35]{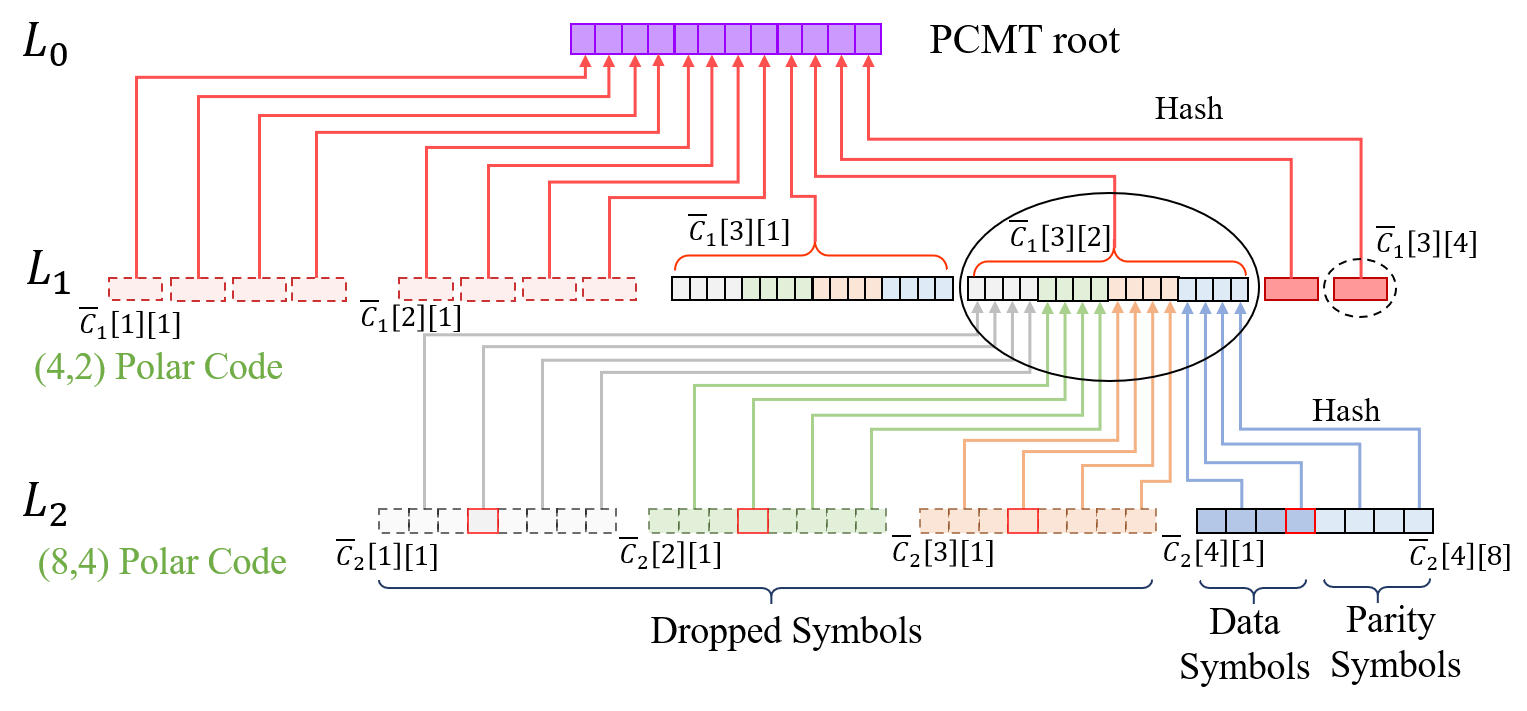}
   \vspace{-20pt}
  \caption{ \footnotesize PCMT $\tT$
$= (k = 4, R = 0.5, q = 4 , l = 2)$. %
In the PCMT, the coded symbols corresponding to all the columns of the polar FG are hashed into the parent layer. 
The dropped symbols are shown in dotted. The symbols in $L_2$ are colored according to the column they belong to in FG $\tFG_8$. The circled symbols in $L_1$ are the Merkle proof of the red symbols in $L_2$. The data (parity) symbols in the Merkle proofs are shown in solid (dashed) circles. }
    \label{fig:PCMT_example}
\end{figure}

Systematic encoding of $(\tNm, \tkm)$ polar codes \cite{SystematicPOlarCodes,SystematicPOlarCodestwostatge}
(required for CMT construction), can be performed using the method described in \cite{SystematicPOlarCodestwostatge}. 
Given information and frozen index sets $\tIm \subset [\tNm]$ and $\tFm = [\tNm] \setminus \tIm$, such that $\vert \tIm \vert = \tkm$, the systematic encoder of \cite{SystematicPOlarCodestwostatge} determines the value of all the VNs in the FG $\tFG_{\tNm}$ such that
i) $\{v_{1i} \;\vert i \;\in \tFm\}$ are set to zero (frozen) symbols;
ii) VNs in $\{v_{(n+1)i} \;\vert\; i \in \tIm\}$ and $\{v_{(n+1)i} \;\vert\; i \in \tFm\}$ are the provided data symbols and resultant parity symbols in the systematic encoding. 
 Details regarding systematic encoding can be found in Appendix \ref{appendix:PDE}.

A polar code can be decoded using a peeling decoder on the code FG. Similar to LDPC codes, the peeling decoder on the FG of a polar code fails if all VNs corresponding to a stopping set (of the FG) are erased. A stopping set is a set of VNs such that every CN connected to this set is connected to at least two VNs in the set. Similar to \cite{Polar-SStree-TCOM}, we call the VNs of a stopping set $\tssingle$ that are in the rightmost column of the FG as its \emph{leaf set} denoted by \tLS($\tssingle$).
An important category of stopping sets in the FG of polar codes is called \emph{stopping trees} \cite{Polar-SStree-TCOM}. A stopping tree is a stopping set that contains only one VN from the leftmost column of the FG (called its root). Fig. \ref{fig:Factor_graph} left panel shows a general stopping set and a stopping tree in the FG $\tFG_{8}$.  As demonstrated in \cite{Polar-SStree-TCOM}, each VN $v_{1i}$, $\closedN{i}{\tNm}$, is root of an unique stopping tree. Let $\tssingletree^{2^n}_i$ be the unique stopping tree with root VN $v_{1i}$ in the FG $\tFG_{2^n}$ and let $f^{2^n}_i = \vert \text{\tLS}(\tssingletree^{2^n}_i) \vert$. It is easy to see (and also proved in \cite{Polar-SStree-TCOM}) that 
$f^{2^n}_i = \ttT_{2^n}(i)$, $\forall \closedN{i}{\tNm}$, where $\ttT_{2^n} = \ttT^{\otimes n}_2$.

\vspace{-0.2cm}
\section{Polar Coded Merkle Tree (PCMT)}\label{sec:PCMT_construction}
\vspace{-0.1cm}
In this section, we describe the construction of a PCMT
under the general CMT framework of Section \ref{sec:prelims}. 
Assume all $N_j$'s are powers of 2. In Section \ref{sec:SEF_algo}, we remove this assumption. Let $\tI_j$ ($\tF_j$) be the information (frozen) index sets of the polar code used in layer $j$ of the PCMT.  We have $\vert \tI_j \vert = RN_j$ and $\vert \tF_j \vert = (1-R)N_j$. 
For convenience, we \deb{re-index} the row indices in FG $\tFG_j$ such that $\tI_j$ and $\tF_j$ are the indices $[1,RN_j]$ and $(RN_j,N_j]$, respectively.

For the PCMT, define intermediate coded symbols (which are used to form the PCMT) $\ttN_j[k][i]$, $\closedN{j}{l}$ where $\closedN{k}{\log N_j+1}$,  $\closedN{i}{N_j}$.  Index $k$ ($i$) is the column (row) number of the VN that the symbol $\ttN_j[k][i]$ corresponds to, in the FG  $\tFG_{N_j}$. 
In the general CMT framework, for $\closedN{j}{l}$, we have $\tttS[j] = \{\ttN_j[\log N_j + 1][i] \;\vert\; \closedN{i}{RN_j}\}$, $\tttP[j] = \{\ttN_j[\log N_j + 1][i] \;\vert\; \openleft{i}{RN_j}{N_j}\}$ and $\tttC[j] = \tttS[j] \cup  \tttP[j]$.

\subsubsection{\trim{Formation of PCMT symbols}}
The $\text{\tparity}()$ procedure for a PCMT is as follows: for the data symbols $\tttS[j]$, use a systematic polar encoder as described in Section \ref{sec:prelims} to find the parity symbols $\tttP[j]$, where VNs corresponding to $\tfrozen{[j]} =\{\ttN_{j}[1][i]\;\vert\; \openleft{i}{RN_j}{N_j}\}$ in $\tFG_{N_j}$ are set as zero chunks. The systematic encoder also provides the set of symbols $\tdropped[j] = \{\ttN_{j}[k][i] \;\vert\; \closedN{k}{\log N_j},\; \closedN{i}{N_j}\}$ which are dropped from the PCMT and are not included in $\tttC[j]$. \trim{However, before dropping, we use them to form the parent layer in the PCMT.
The $\text{\tparent}()$ procedure for a PCMT is as
follows.
Let $x \bmod {p} := (x)_p$ and let ${\fontfamily{qcr}\selectfont \text{Hash}}$ and ${\fontfamily{qcr}\selectfont \text{concat}}$ represent the hash and string concatenation functions, respectively. We have
} %
\vspace{-0.2cm}
\begingroup
\allowdisplaybreaks
\begin{align}\label{eqn:data_symbol_formation}
  \ttN_{j-1}&[\log N_{j-1}+1][i]\\&= {\fontfamily{qcr}\selectfont \text{concat}}( \{ {\fontfamily{qcr}\selectfont \text{Hash}}(\ttN_{j}[k][x]) \;\vert\; \closedN{k}{\log N_j+1},\nonumber\\&\; \closedN{x}{N_j},\; i = 1 + (x-1)_{RN_{j-1}} \}  ),\; \forall \closedN{i}{RN_{j-1}}\nonumber,
\end{align}
\endgroup
where $\tttS[j-1] = \{\ttN_{j-1}[\log N_{j-1} + 1][i] \;\vert \; \closedN{i}{RN_{j-1}}\}$.
For a PCMT, the root ($\tttS[0]$) has a size $t = N_1(\log N_1 + 1)$ hashes. In Fig. \ref{fig:PCMT_example}, the formation of $\ttN_{1}[3][2]$ is shown.

In the above $\text{\tparent}()$ procedure,  data symbols in $\tttS[j-1]$ are formed by taking the hashes of all the $N_j(\log N_j +1)$ intermediate coded symbols of layer $j$ (i.e., $\tdropped[j] \cup \tttC[j]$) and concatenating $q(\log N_j+1)$ hashes together according to Eqn. \eqref{eqn:data_symbol_formation}. 
The intuition behind taking the hashes of all the intermediate coded symbols is so that the symbols in $\tdropped[j]$ also get committed to the root (i.e., these symbols also have a Merkle proof). Although dropped, the symbols in $\tdropped[j]$ can be decoded back by a peeling decoder using the available (non-erased) symbols of $\tttC[j]$. Once decoded, they can be used to build small IC proofs using the degree 2 and 3 CNs in the polar FG $\tFG_{N_j}$.

\subsubsection{Merkle proof of PCMT symbols}
For the above PCMT construction, symbols in $\tttC[j]$ \trim{and} $\tdropped[j]$ have Merkle proofs. 
The Merkle proof of the symbols $\ttN_{j}[k][i]$, $\closedN{k}{\log N_j+1}, \; \closedN{i}{N_j}$ consists of a data symbol and parity symbol from each layer of the PCMT above $L_j$ similar to LCMT in \cite{AceD,SSskew-full}. Precisely, it is given by the following: 
\begin{align}
    &\text{\tproof}(\ttN_{j}[k][i]) = \{\ttN_{j'}[\log N_{j'} + 1][1 + (i-1)_{RN_{j'}}],\\& \ttN_{j'}[\log N_{j'} + 1][1 + RN_{j'} + (i-1)_{(1-R)N_{j'}}]\;\vert\; \closedN{j'}{j-1}\}\nonumber. 
\end{align}

\trim{The Merkle proof} for $\ttN_2[4][4]$ is shown in Fig. \ref{fig:PCMT_example}. 
The data symbols from each layer in $\text{\tproof}(\ttN_{j}[k][i])$ lie on the path of $\ttN_{j}[k][i]$ to the PCMT root; this path is used to
check the integrity of $\ttN_{j}[k][i]$ \trim{in a manner similar to  an LCMT \cite{CMT, AceD}.}

\subsubsection{Hash-aware peeling decoder and IC proofs}
The PCMT is decoded using a hash-aware peeling decoder similar to LCMT in \cite{CMT}. The $\tdecodelayer(L_j)$ procedure for the decoder is as follows. Its acts on the FG $\tFG_{N_j}$. It takes as inputs the frozen symbols $\tfrozen[j]$ and the non hidden symbols of $\tttC[j]$. Using a peeling decoder, it finds all symbols in $\tdropped[j] \cup \tttC[j]$. The hash of every decoded (peeled) symbol is matched with its hash provided by the parent layer $L_{j-1}$. If hashes do not match, an IC attack is detected. In this case, IC proof is generated using the degree 2 or 3 CN of the FG $\tFG_{N_j}$ as per the 
general CMT framework.

\vspace{-0.2cm}
\subsection{DA attacks on PCMT}
\vspace{-0.1cm}
Consider layer $L_j$, $\closedN{j}{l}$, of the PCMT. For a given information index set $\tI_j$, let $\tSS^{\tI_j}$ denote the set of all stopping sets in the FG $\tFG_{N_j}$ that
do not have any VNs corresponding to $\tfrozen[j]$.
The hash-aware peeling decoder fails to decode $L_j$ if coded symbols corresponding to a stopping set in $\tSS^{\tI_j}$ are erased. Since all the coded symbols except the rightmost column of $\tFG_{N_j}$ are dropped, the peeling decoder will fail if the adversary hides the leaf set of a stopping set in $\tSS^{\tI_j}$. 

To prevent a DA attack, light nodes randomly sample symbols from the PCMT base layer, i.e., $\tttC[l]$. Randomly sampling the base layer ensures that the non-dropped symbols of intermediate layer $L_j$, $\closedN{j}{l-1}$, i.e.,  $\tttC[j]$, are also randomly sampled via the Merkle proofs of the base layer samples similar to an LCMT in \cite{CMT}.
For subsequent analysis, we assume (WLOG) that the adversary conducts a DA attack on  the base layer of the PCMT. 
To find the adversary strategy that leads to the largest probability of failure when the light nodes use random  sampling, we use the following important property of stopping sets in polar FGs that was proved in \cite{Polar-SStree-TCOM}:
\begin{equation}\label{eqn:SStreeproperty}
    \min_{\tssingle \in \tSS^{\tI_l}}\vert\text{\tLS}(\tssingle)\vert = \min_{i \in \tI_l}f^{N_l}_{i}.
\end{equation}

Eqn. \eqref{eqn:SStreeproperty} implies that, when light nodes use random sampling, the best strategy for the adversary (to maximize the probability of failure) is to hide the smallest leaf set amongst all stopping trees with non frozen root. Thus, $\talpha = \underset{i \in \tI_l}{\min}f^{N_l}_{i}$.

\vspace{-0.2cm}
\section{Sampling-Efficient Freezing Algorithm}\label{sec:SEF_algo}
\vspace{-0.1cm}
For the best adversary strategy, $\talpha = \min_{i \in \tI_l}f^{N_l}_{i}$. 
Based
on this result,
a na\"\i ve frozen set selection method would be to select the indices of $RN_l$ VNs from the leftmost column of the FG $\tFG_l$ with the smallest stopping tree leaf set sizes $f^{N_l}_i$. Note that for this na\"\i ve frozen set selection, the polar code becomes equivalent to a Reed-Muller (RM) Code \cite{ReedMuller}. We call the na\"\i ve frozen set selection as \emph{Na\"\i ve-RM (NRM)} algorithm for which it
can be easily shown that $\talpha^{NRM} = \min\left(\ttT_{N_l};(1-R)N_l\right)$ \revision{(which is the $(1-R)N_l$-th smallest value of $\ttT_{N_l}$)}.

Next, we describe the Sampling-Efficient Freezing (SEF) algorithm and show that it results in a higher effective undecodable threshold compared to the NRM algorithm. Additionally, our algorithm allows for polar codes of any length and are not limited to powers of two. Assume that in this section, for all FG $\tFG_{\tNm}$, the rows in $\tFG_{\tNm}$ are indexed $1$ to $\tNm$ from top to bottom. The SEF algorithm is based on the following lemma. 

\vspace{-0.2cm}
\begin{lemma}\label{lemma:last_few_frozen}
Consider FG $\tFG_{\tNm}$ where $\tNm$  is a power of two. 
Let $\tFm$ and $\tIm$ be the frozen and information index sets. For a parameter $\tlast$, define the set of VNs $\tlastVN^{\tlast}_{\tNm}[k] = \{v_{ki}\; \vert\; \closed{i}{\tNm - \tlast +1}{\tNm}\}$.
If  $[\tNm - \tlast +1,\tNm] \subset \tFm$, then i) $\forall$  $\tssingle \in \tSS^{\tIm}$, $\tssingle$ does not have any VNs in $\tlastVN^{\tlast}_{\tNm}[\log \tNm + 1]$; ii) all VNs in $\{\tlastVN^{\tlast}_{\tNm}[k] \;\vert \; \closedN{k}{\log \tNm + 1} \}$ are zero chunks.
\vspace{-0.25cm}
\begin{proof}[Proof Idea]
Assuming that a stopping set $\tssingle \in \tSS^{\tIm}$ has a VN
from $\tlastVN^{\tlast}_{\tNm}[\log \tNm + 1]$, we prove, by incorporating the definition of stopping sets, that $\exists$ ${i}$, $\closed{i}{\tNm - \tlast+1}{\tNm}$ such that $v_{1i} \in \tssingle$. This is a contradiction since $\tssingle \in \tSS^{\tIm}$ and hence does not have VNs in $\tlastVN^{\tlast}_{\tNm}[1]$. Full proof can be found in Appendix \ref{appendix:proof_last_few_frozen}.
\end{proof}
\vspace{-0.35cm}

\end{lemma}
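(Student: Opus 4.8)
The plan is to work with the recursive (Kronecker) structure of the factor graph $\tFG_{\tNm}$ and argue by induction on $n = \log \tNm$. Recall that $\tFG_{2^n}$ decomposes into two copies of $\tFG_{2^{n-1}}$: writing the rightmost (data/parity) column with $2^n$ rows, the top half of rows is governed by one sub-FG and the bottom half by another, and the leftmost column connects to these halves through the first layer of degree-2/degree-3 CNs coming from $\tP_2$. The key observation for part (i) is a ``propagation backwards'' claim: if a stopping set $\tssingle$ contains a VN $v_{(n+1)i}$ in the last $\tlast$ rows, then following the CN constraints from right to left forces $\tssingle$ to contain some VN $v_{1i'}$ with $i'$ also in the last $\tlast$ rows. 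For part (ii), I would show directly (again by the recursive structure, or via the systematic-encoding description in Section~\ref{sec:prelims}) that freezing the last $\tlast$ rows of the leftmost column forces the last $\tlast$ rows of \emph{every} column to be zero.

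For part (i), here is the concrete route. Suppose toward contradiction that $\tssingle \in \tSS^{\tIm}$ has a VN in $\tlastVN^{\tlast}_{\tNm}[\log\tNm+1]$; let $v_{(n+1)i^{\ast}}$ be such a VN with $i^{\ast} \in [\tNm-\tlast+1, \tNm]$, i.e.\ in the bottom block of rows. I would establish, by induction on the column index working leftward, that for each $k$ there is a row index $i_k$ in the ``last $\tlast$ rows'' range such that $v_{k i_k} \in \tssingle$. The inductive step uses the definition of a stopping set: the CN attached to $v_{k i_k}$ on its left side has degree 2 or 3, and it must touch at least two VNs of $\tssingle$; the neighbours of that CN in column $k-1$ consist of a VN in the same row $i_k$ (the ``straight'' edge of the butterfly) together with possibly a VN in a partner row — and I need to check that the straight-edge neighbour $v_{(k-1)i_k}$, together with the structure of the Kronecker butterfly at that stage, keeps us inside the bottom block and in particular that at least one forced VN stays in rows $[\tNm-\tlast+1,\tNm]$. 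Carrying this to $k=1$ yields $v_{1 i_1} \in \tssingle$ with $i_1 \in [\tNm - \tlast + 1, \tNm] \subseteq \tFm$, contradicting $\tssingle \in \tSS^{\tIm}$ (which by definition excludes VNs in the frozen rows of the leftmost column, i.e.\ in $\tlastVN^{\tlast}_{\tNm}[1]$).

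The main obstacle I anticipate is making the ``row index stays in the last $\tlast$ block'' argument precise and robust. In the Kronecker butterfly, a CN in column $k$ connects row $i$ to row $i$ and to a partner row obtained by flipping a single bit of the binary expansion of $i$ — and which bit depends on $k$. So when I peel one column to the left I cannot simply say ``the row index is unchanged''; I need to track a sub-block of rows whose size halves (or stays put) as $k$ decreases, and argue that this sub-block remains within the original last-$\tlast$ window. This is exactly where the hypothesis $[\tNm-\tlast+1,\tNm]\subset\tFm$ and the re-indexing convention matter, and where I would lean on the stopping-tree leaf-set identity $f^{2^n}_i = \ttT_{2^n}(i)$ and the characterization of stopping trees from \cite{Polar-SStree-TCOM}: rather than re-deriving the butterfly bookkeeping from scratch, I would phrase part (i) as saying that any stopping set meeting the bottom $\tlast$ leaves must contain, among its leaves, a leaf of some stopping tree rooted in the bottom $\tlast$ rows, and invoke Eqn.~\eqref{eqn:SStreeproperty} together with monotonicity of $\ttT_{2^n}$ entries to locate that root in $[\tNm-\tlast+1,\tNm]$. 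Part (ii) is comparatively routine: the frozen VNs in the leftmost column are zero by definition, and a zero input on the last $\tlast$ rows propagates to zeros on the last $\tlast$ rows of all subsequent columns because each such row is computed (in the encoding direction) only from rows that are themselves in the last $\tlast$ block — an easy induction on $k$ using the lower-triangular-like shape of $\tP_2^{\otimes n}$ restricted to these rows.
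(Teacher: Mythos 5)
Your overall skeleton (contradiction, peeling the witness VN leftward column by column for part (i); zero-propagation by induction on the column for part (ii)) is the same as the paper's. Part (ii) as you describe it is fine. The problem is that you flag the crucial step of part (i) --- keeping the witness row inside $[\tNm-\tlast+1,\tNm]$ as you move left --- as an unresolved obstacle, and the workaround you then propose does not work. Eqn.~\eqref{eqn:SStreeproperty} is only a statement about the \emph{minimum} leaf-set size over $\tSS^{\tIm}$; it says nothing about which leaves an individual stopping set must contain, so it cannot be used to deduce that a stopping set touching the bottom $\tlast$ leaves contains a stopping tree rooted in the bottom $\tlast$ rows. Moreover, the ``monotonicity of $\ttT_{2^n}$ entries'' you want to invoke is false: e.g.\ $\ttT_8 = (1,2,2,4,2,4,4,8)^{T}$ is not monotone in the row index. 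So as written, part (i) has a genuine gap.

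The fix is exactly the lower-triangular observation you already use for part (ii), applied in the other direction. Because $\tP_2$ is lower triangular, every CN $c_{ki}$ in row $i$ is adjacent in column $k$ either only to $v_{ki}$ (degree 2) or to $v_{ki}$ and one partner $v_{ki_2}$ with $i_2 > i$ (degree 3); the partner row is always \emph{below}, never above. Hence if $v_{(k+1)i}\in\tssingle$ with $i\ge \tNm-\tlast+1$, the stopping-set condition at $c_{ki}$ forces some $v_{ki'}\in\tssingle$ with $i'\ge i\ge \tNm-\tlast+1$: the witness row index is non-decreasing as you peel leftward, so no halving sub-blocks or bit-flip bookkeeping is needed. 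Iterating to $k=1$ gives $v_{1i'}\in\tssingle$ with $i'\in[\tNm-\tlast+1,\tNm]\subset\tFm$, contradicting $\tssingle\in\tSS^{\tIm}$. This is precisely the paper's argument.
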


\begin{figure*}[t]
    \centering
    \begin{subfigure}{0.33\linewidth}
\begin{minipage}{0.99\linewidth}
    \begin{tikzpicture}
  \node (img)
  {\includegraphics[scale=0.2]{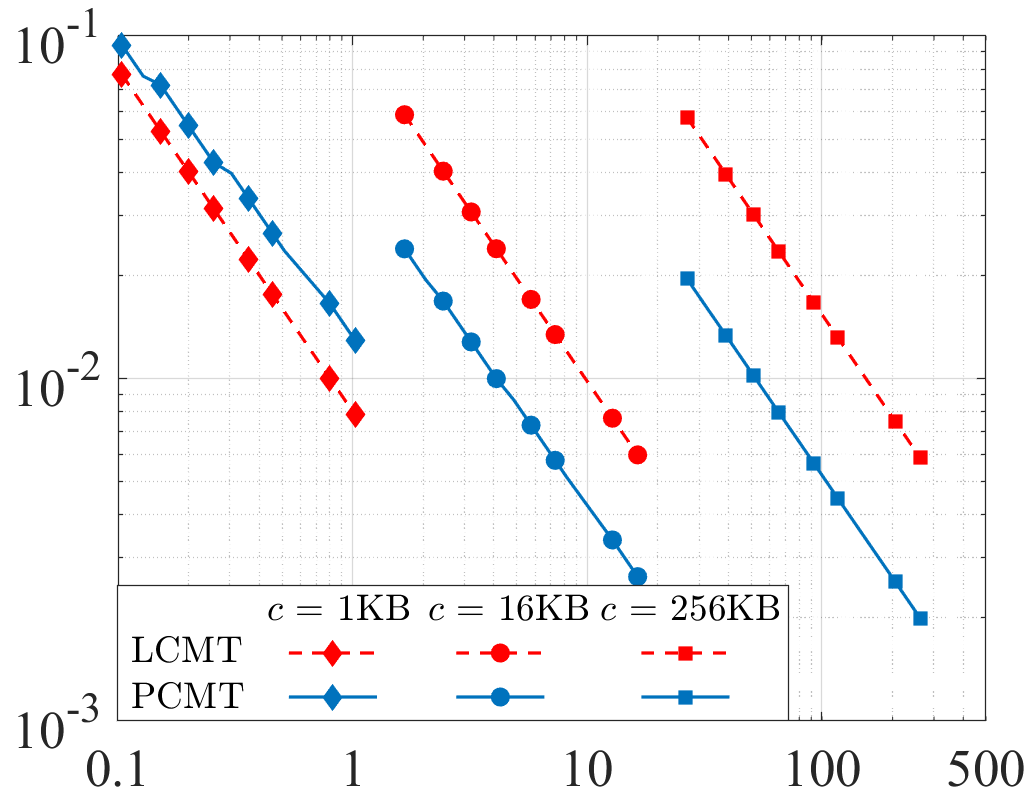}};
  \node[below=of img, node distance=0cm, yshift=1.2cm,font=\color{black}] {\small  Block size $b$ (MB)};
  \node[left=of img, node distance=0cm, rotate=90, anchor=center,yshift=-0.9cm,font=\color{black}] {\small IC proof size / block size};
 \end{tikzpicture}
 \end{minipage}
    \end{subfigure}%
    \begin{subfigure}{0.33\linewidth}
\begin{minipage}{0.99\linewidth}
\begin{tikzpicture}
  \node (img) {\includegraphics[scale=0.2]{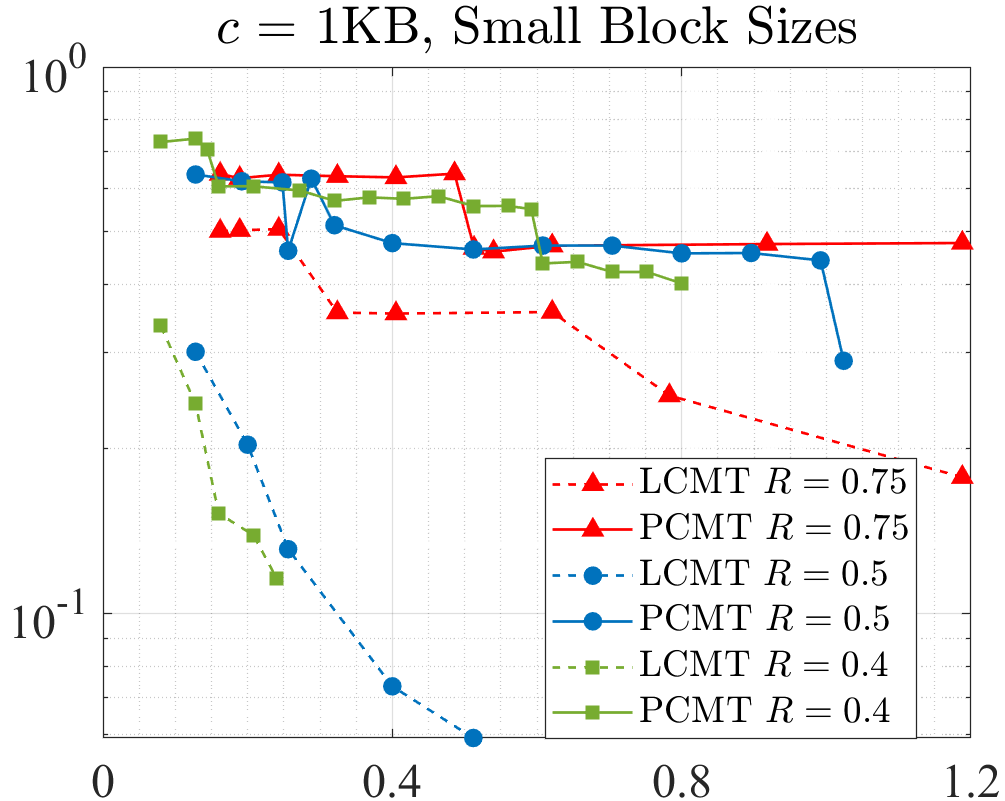}};
  \node[below=of img, node distance=0cm, yshift=1.2cm,font=\color{black}] {\small Block size $b$ (MB)};
  \node[left=of img, node distance=0cm, rotate=90, anchor=center,yshift=-0.9cm,font=\color{black}] {$P_f(s)$};
 \end{tikzpicture}
 \end{minipage}
    \end{subfigure}%
\begin{subfigure}{0.33\linewidth}
\begin{minipage}{0.99\linewidth}
\begin{tikzpicture}
  \node (img) {\includegraphics[scale=0.2]{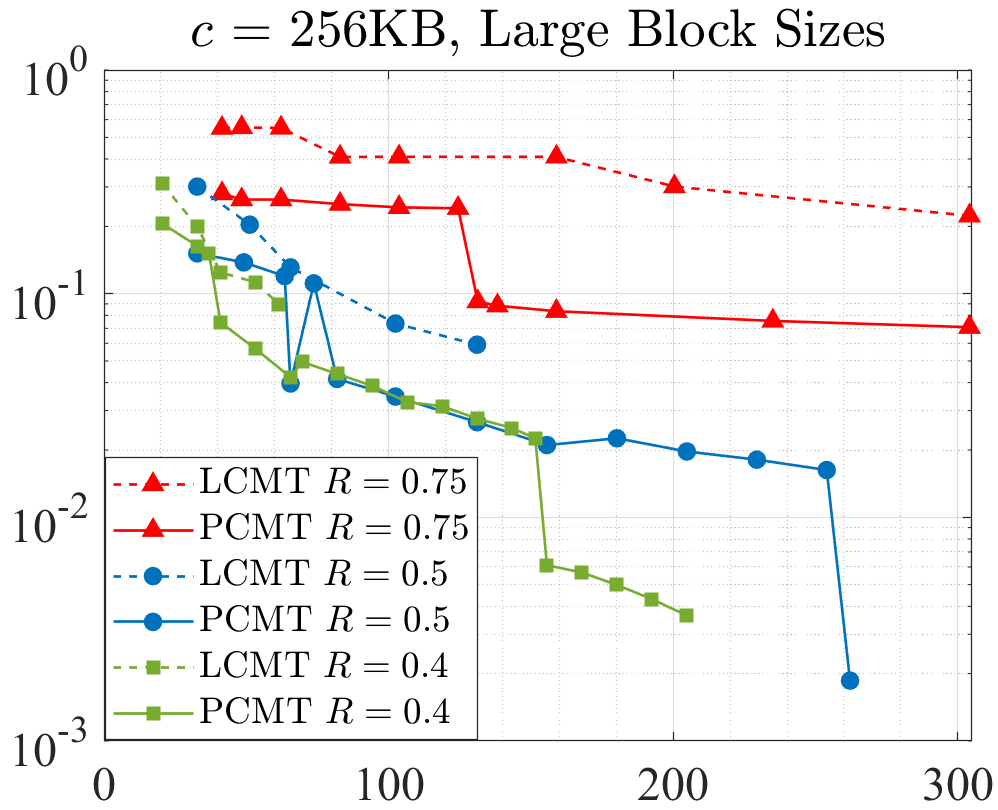}};
  \node[below=of img, node distance=0cm, yshift=1.25cm,font=\color{black}] {\small Block size $b$ (MB)};
  \node[left=of img, node distance=0cm, rotate=90, anchor=center,yshift=-0.9cm,font=\color{black}] {$P_f(s)$};
 \end{tikzpicture}
 \end{minipage}
 \vspace{-3pt}
    \end{subfigure}
     \vspace{-10pt}
    \caption{\footnotesize{\revision{All figures use $b = cRN_l$.} \deb{Left panel: Comparison of IC proof size normalized by block size $b$ for different data symbol size $c = \frac{b}{k}$. We use $(R,q,l) = (0.5,4,4)$ in the figure. For IC proof size of LCMT, we use the maximum CN degree $d_c$. For PCMT, the maximum CN degree $d_p = 3$; Middle and Right panel: $P_f(s)$ vs. blocksize $b$ for LCMT and PCMT. \revision{The two panels use  $(R,q,l) = (0.4,5,4)$, $(0.5,4,4)$, and $(0.75,4,3)$ %
    and a constant data symbol size $c$.} \revision{Sample size $s$ for PCMT and LCMT are selected such that total sample download size is $\frac{b}{3}$ and $\frac{b}{5}$ for the middle and right panels, respectively.}}}}
    \label{fig:IC_proof_P_f}
     \vspace{-18pt}
\end{figure*}

\vspace{-0.05cm}
Lemma \ref{lemma:last_few_frozen} states that, if the last $\tlast$ rows (from the bottom) in the leftmost column of $\tFG_{\tNm}$ are all frozen, then no stopping set in $\tSS^{\tIm}$ can have a VN from the last $\tlast$ rows in the rightmost column of $\tFG_{\tNm}$. Thus, for a frozen index set $\tFm$ such that
$[\tNm - \tlast +1,\tNm] \subset \tFm$, the light nodes do not need to sample the VNs in $\tlastVN^{\tlast}_{\tNm}[\log \tNm + 1]$.
We leverage the above property to improve the effective undecodable threshold of polar codes.
Additionally, since all the VNs in $\{\tlastVN^{\tlast}_{\tNm}[k] \;\vert\; \closedN{k}{\log \tNm+1}\}$, which are all the VNs 
in the last $\tlast$ rows of $\tFG_{\tNm}$, are zero chunks, these VNs and their associated edges can be removed from the FG. After this removal, we get the FG $\tFG_{\tNm - \tlast}$ of a polar code of length $\tNm - \tlast$. We use this property to design polar codes of lengths that are not powers of two. \trim{An example of FG $\tFG_{5}$
is shown in Fig. \ref{fig:Factor_graph} right panel.} Algorithm \ref{alg:SEF} provides the SEF algorithm to design the frozen index set $\tF$ of an $(N,K)$ polar code; $N$ is not necessarily a power of two.

\begin{algorithm}[t]
\caption{SEF Algorithm}\label{alg:SEF}
\begin{algorithmic}[1]
\State \textbf{Inputs:} $N$, $K$ \textbf{Output:} $\tFG_{N}$, $\tF$, $\ttT_N$ %
\State \textbf{Initialize:} $\tNm = 2^{\lceil \log N \rceil}$, FG $\tFG_{\tNm}$, $\tlast_1 = \tNm - N$, $i = N$.

\State $\tFG_{N}$ =  FG obtained by removing all VNs in $\{\tlastVN^{\tlast_1}_{\tNm}[k]\;\vert\; \closedN{k}{\log \tNm+1}\}$ and their connected edges from $\tFG_{\tNm}$ (also remove any CNs that have no connected edges)

\State $\ttT_{N}$ = $\ttT_{\tNm}$ with last $\tlast_1$ entries removed

\State $\tF =  \{e \;|\; \closedN{e}{N} ,\;  \ttT_{N}(e) < \min(\ttT_{N}; N - K)\}$

\While{$\vert \tF \vert < N- K$} 
\If{$i \not\in \tF$} 
$\tF  = \tF \cup i$ \textbf{end if};
 $i = i-1$
\EndIf
\EndWhile
\end{algorithmic}
\end{algorithm}

In the SEF algorithm, we first remove the VNs from the last $\tlast_1 = \tNm - N$ rows in FG $\tFG_{\tNm}$ (step 3). This gives us the FG $\tFG_{N}$ which has $N$ coded symbols and is used for the construction of the PCMT.   Then, in steps 4-7, we select the frozen index set $\tF$ of size $N-K$ for the $(N,K)$ polar code. Note that $\ttT_N$ (step 4) stores the stopping tree sizes of the VNs $v_{1i}$, $ \closedN{i}{N}$. For the selection of $\tF$, we first select all the indices $e$ in $[N]$ whose corresponding VNs $v_{1e}$ have their stopping tree sizes less than $\min(\ttT_N; N-K)$ (step 5). 
Then, the remaining indices in $\tF$ are selected as the VN indices from the bottom row of FG $\tFG_{N}$ that are not already present in $\tF$ (steps 6-7). We have the following lemma.

\vspace{-0.2cm}
\begin{lemma}\label{lemma:undecodable_threshold}
Let $\tlast_2$ be the largest $\tlast$ such that $[N - \tlast +1, N]  \subset \tF$ and let $\tI = [N]\setminus\tF$. For an $(N,K)$ polar code produced by the SEF algorithm, let the light nodes randomly sample among the top $N - \tlast_2$ VNs from the rightmost column of FG $\tFG_{N}$. For this sampling strategy, 
 the effective undecodable threshold is
 $\talpha^{SEF} = \frac{\min_{i \in \tI}\ttT_{N}(i)*N}{N - \tlast_2}$. As such, $P_f(s) = \left(1 - \frac{\talpha^{SEF}}{N}\right)^s$.
\end{lemma}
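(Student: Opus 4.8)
The plan is to reduce the non‑power‑of‑two factor graph $\tFG_N$ produced in step~3 of Algorithm~\ref{alg:SEF} back to the power‑of‑two graph $\tFG_{\tNm}$ (with $\tNm = 2^{\lceil \log N\rceil}$ and $\tlast_1 = \tNm - N$), so that Lemma~\ref{lemma:last_few_frozen} and Eqn.~\eqref{eqn:SStreeproperty} become directly applicable. Concretely, I would lift the SEF frozen set to $\tFm := \tF \cup [N+1,\tNm] \subseteq [\tNm]$ and set $\tIm := [\tNm] \setminus \tFm$. Two bookkeeping facts are immediate and are used throughout: $\tIm = [N]\setminus \tF = \tI$ (the indices added to $\tFm$ all exceed $N$, while $\tI \subseteq [N]$), and, since $\ttT_N$ is $\ttT_{\tNm}$ with its last $\tlast_1$ entries removed (step~4), $\ttT_N(i) = \ttT_{\tNm}(i)$ for all $i \in [N]$, so $\min_{i\in\tI}\ttT_N(i) = \min_{i\in\tI}\ttT_{\tNm}(i)$.

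\emph{Transfer of the stopping-set structure.} Since $[\tNm - \tlast_1 + 1,\tNm] = [N+1,\tNm] \subseteq \tFm$, Lemma~\ref{lemma:last_few_frozen}(ii) shows every VN in the last $\tlast_1$ rows of $\tFG_{\tNm}$ is a zero chunk, so deleting those rows and incident edges --- exactly step~3 --- is legitimate and yields $\tFG_N$; and Lemma~\ref{lemma:last_few_frozen}(i) shows no $\tssingle \in \tSS^{\tIm}$ meets the last $\tlast_1$ rows of the rightmost column. Hence passing from $\tFG_{\tNm}$ (with frozen set $\tFm$) to $\tFG_N$ (with frozen set $\tF$) removes only zero‑chunk VNs that the relevant stopping sets --- in particular the stopping trees $\tssingletree^{\tNm}_i$, $i\in\tIm$, which by \cite{Polar-SStree-TCOM} achieve the minimum leaf set --- do not rely on, so the achievable leaf sets of stopping sets with non‑frozen root coincide on the two graphs. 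Applying Eqn.~\eqref{eqn:SStreeproperty} to $\tFG_{\tNm}$ with frozen set $\tFm$, and using $f^{\tNm}_i = \ttT_{\tNm}(i)$, then gives $\min_{\tssingle \in \tSS^{\tI}} \vert \text{\tLS}(\tssingle)\vert = \min_{i\in\tIm} f^{\tNm}_i = \min_{i\in\tI}\ttT_N(i)$; so the hash‑aware peeling decoder on $\tFG_N$ fails precisely when the adversary hides the leaf set of some $\tssingle \in \tSS^{\tI}$, and the smallest such leaf set has size $\min_{i\in\tI}\ttT_N(i)$.

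\emph{Confinement and probability.} By the choice of $\tlast_2$, $[N - \tlast_2 + 1, N] \subseteq \tF$, hence $[N - \tlast_2 + 1, \tNm] = [N-\tlast_2+1,N]\cup[N+1,\tNm] \subseteq \tFm$, i.e.\ $[\tNm - (\tlast_1 + \tlast_2) + 1,\tNm] \subseteq \tFm$. Applying Lemma~\ref{lemma:last_few_frozen}(i) on $\tFG_{\tNm}$ with parameter $\tlast_1 + \tlast_2$ shows no $\tssingle \in \tSS^{\tIm}$ has a VN in the last $\tlast_1 + \tlast_2$ rows of the rightmost column; removing the $\tlast_1$ deleted (all‑zero) rows, every stopping set in $\tSS^{\tI}$ on $\tFG_N$ has its leaf set inside the top $N - \tlast_2$ rows of the rightmost column, which is exactly the set of symbols the light nodes sample (the remaining $\tlast_2$ rightmost‑column symbols are zero chunks by Lemma~\ref{lemma:last_few_frozen}(ii), so nothing is lost), and in particular $\min_{i\in\tI}\ttT_N(i) \le N - \tlast_2$. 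I then repeat the DA‑attack argument from after Eqn.~\eqref{eqn:SStreeproperty}: the adversary hides the leaf set of some $\tssingle \in \tSS^{\tI}$, all of which lie among the $N - \tlast_2$ sampled symbols, so if it hides $\alpha = \vert\text{\tLS}(\tssingle)\vert$ of them a light node taking $s$ i.i.d.\ uniform samples among those $N-\tlast_2$ symbols fails to detect the attack with probability $(1 - \alpha/(N-\tlast_2))^s$; since this is decreasing in $\alpha$, the optimal strategy hides $\alpha = \min_{i\in\tI}\ttT_N(i)$, giving $P_f(s) = (1 - \tfrac{\min_{i\in\tI}\ttT_N(i)}{N-\tlast_2})^s$. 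Writing this in the canonical form $(1 - \alpha/N)^s$ forces $\alpha = \talpha^{SEF} := \frac{N\min_{i\in\tI}\ttT_N(i)}{N - \tlast_2}$, the effective undecodable threshold (which satisfies $\talpha^{SEF}\le N$ by the bound above), and yields $P_f(s) = (1 - \talpha^{SEF}/N)^s$.

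The step requiring the most care is the reduction: verifying that the single lift $\tFm = \tF \cup [N+1,\tNm]$ simultaneously (i) validates the FG surgery in step~3, (ii) makes the achievable leaf sets of stopping sets with non‑frozen root invariant under that surgery, so Eqn.~\eqref{eqn:SStreeproperty} transfers verbatim, and (iii), invoked with the larger parameter $\tlast_1 + \tlast_2$, confines every leaf set to the region light nodes actually sample --- together with the small identities $\tIm = \tI$ and $\ttT_N = \ttT_{\tNm}$ restricted to $[N]$. Everything downstream (the per‑sample miss probability and the renormalization defining $\talpha^{SEF}$) is routine.
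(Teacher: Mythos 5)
Your proposal is correct and follows essentially the same route as the paper's proof: lifting the SEF frozen set to $\tFm = \tF \cup [N+1,\tNm]$ on the power-of-two graph $\tFG_{\tNm}$, invoking Lemma~\ref{lemma:last_few_frozen} with the combined parameter $\tlast_1 + \tlast_2$ to confine leaf sets to the top $N-\tlast_2$ rightmost-column VNs, and applying Eqn.~\eqref{eqn:SStreeproperty} on $\tFG_{\tNm}$ to identify the minimum leaf-set size as $\min_{i\in\tI}\ttT_N(i)$ before renormalizing to obtain $\talpha^{SEF}$. Your added checks (invariance of leaf sets under the row deletion, and $\min_{i\in\tI}\ttT_N(i)\le N-\tlast_2$) are fine but not a different argument.
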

\begin{proof}[Proof Idea]
We prove the lemma by using Lemma \ref{lemma:last_few_frozen} and Eqn.
\ref{eqn:SStreeproperty}. 
The full proof can be found in Appendix \ref{appendix:undecodable_threshold}.
\end{proof}
\vspace{-0.2cm}

\trim{Note that due to step 5 of the SEF algorithm, $\min_{i \in \tI}\ttT_{N}(i) \geq \min(\ttT_N; N-K)$. Thus, the undecodable ratio of the SEF algorithm is always as big as the NRM algorithm. }
The FG $\tFG_{N}$ output by the SEF algorithm is used for constructing different layers of the PCMT. $\tFG_{N}$ has $\lceil \log N \rceil +1$ 
VN and $\lceil \log N \rceil$ CN columns each with $N$ VNs or CNs. Thus, for the PCMT construction described in Section \ref{sec:PCMT_construction}, we replace all instances of $\log N_j$
with $\lceil \log N_j \rceil$ (where $N_j$'s need not be powers of two). Rest of the construction remains the same.

\setlength{\extrarowheight}{3pt}
\setlength{\tabcolsep}{2.8pt}
  \begin{table}[t]
    \centering
\resizebox{!}{2.5cm}{
\begin{tabular}[b]{| c |c  c |c c |c c|}

\hline
& \multicolumn{2}{c|}{2D-RS} & \multicolumn{2}{c|}{LCMT}  & \multicolumn{2}{c|}{PCMT}  \\

& $\mathcal{T}_1$ & $\mathcal{T}_2$ & $\mathcal{T}_1$ & $\mathcal{T}_2$ & $\mathcal{T}_1$ & $\mathcal{T}_2$\\
\hline
Root size (KB) & 2.05 & 5.82 & 0.26 & 0.51 & 1.02 & 2.56\\
\hline
IC proof size (MB) & 5.80 & 16.40 & 1.54 & 1.54 & 0.53 & 0.54\\
\hline
Total sample download size (MB) & 10.76  & 12.30 & 43.01 & - & 33.80 & 80.10\\
\hline
 Decoding complexity & \multicolumn{2}{c|}{$O(N_l^{1.5})$} & \multicolumn{2}{c|}{$O(N_l)$}& \multicolumn{2}{c|}{$O(N_l\lceil\log N_l\rceil)$}\\
\hline
\end{tabular}
}  
    \caption{\footnotesize \revision{Comparison of various performance metrics for 2D-RS, LCMT and PCMT. The table uses $\mathcal{T}_1 = (k,R,q,l) = (512,0.5,4,8)$, $\mathcal{T}_2 = (4096,0.5,4,10)$, $c = 256$KB, and $b = cRN_l$. 
    Sample download size is calculated such that $P_f(s)$ is $0.01$. Due to complexity of finding $\talpha$ for LCMT, we do have a corresponding total sample download size value for $\mathcal{T}_2$.}  %
    }\label{figtable:numerical_comp}
  \end{table}

\vspace{-0.2cm}
\section{Simulations and Performance Comparison}\label{sec:sims}
\vspace{-0.15cm}
In this section, we demonstrate the benefits of a PCMT with
respect to the CMT metrics i)-iv) described in Section \ref{sec:prelims} \deb{when the size of the block $b$ is large}. We also compare the performance of a PCMT with an LCMT \revision{and 2D-RS codes}.  We denote the output size of the {\fontfamily{qcr}\selectfont \text{Hash}} function as $y$ and the size of the data chunk (symbol) in the base layer as $c$ \deb{where block size $b = ck = cRN_l$.}  We use $y = 256$ bits.  All PCMTs are built using SEF Polar codes. All LCMTs are built using LDPC codes constructed using the PEG algorithm \cite{PEG} where the degree of all VNs is set to 3. For the PEG LDPC codes, the undecodable threshold $\talpha$ is evaluated by solving an Integer Linear Program (ILP) as described in \cite{ILPsearch} and is computationally infeasible for larger code lengths.
\deb{In contrast, SEF Polar codes have an easily computable $\talpha$  using Lemma \ref{lemma:undecodable_threshold}}.
 \deb{Due to complexity issues, we compute $\talpha$ for LDPC  codes up to a feasible code length (and, thus, up to a feasible block size).}
\deb{From the description of the PCMT provided in Section \ref{sec:PCMT_construction}, derivation of its root size, IC proof size, and single sample download size (base layer sample and associated Merkle proof) is straight forward and is provided in Appendix \ref{appendix:table_calculation} \revision{(which we use to generate the plots in Fig. \ref{fig:IC_proof_P_f})}.}

\vspace{-0.01cm}
\lev{
\trim{In Fig. \ref{fig:IC_proof_P_f} left panel, we plot the IC proof size vs. block size $b$ for an LCMT and a PCMT and different data symbol sizes $c$.}
We see that for $c = 256$ and $16$KB (large block sizes), the IC proof size is smaller for a PCMT compared to an LCMT and gets bigger than an LCMT for $c=1$KB (small block sizes).}

\vspace{-0.2cm}
\revision{
\begin{remark} 
We note that for a PCMT and an LCMT with the same CMT $\mathcal{T}$ parameters, the PCMT incurs an asymptotic penalty of $O(\log b)$ in the IC-proof size over the LCMT due to collecting the hashes of VNs in all the columns of the FG (can be seen from the expressions in Appendix \ref{appendix:table_calculation}). However for practical block sizes of interest, the IC-proof size of a PCMT can be significantly lower compared to an LCMT, as shown in Fig. \ref{fig:IC_proof_P_f}, due to the low CN degree in the FG of polar codes.
\end{remark}
}
\vspace{-0.2cm}

\deb{In Fig. \ref{fig:IC_proof_P_f} middle and right panels, we compare the probability of failure $P_f(s)$ to detect a DA attack conducted on the base layer. We compare $P_f(s)$ for large and short block sizes in the middle and right panels, respectively.}
\lev{From Fig. \ref{fig:IC_proof_P_f}, we see that PCMT has a worse probability of failure compared to an LCMT for small block sizes. However for large block sizes, PCMT always has a lower probability of failure compared to an LCMT across all rates $R$ and block sizes $b$ thanks to a higher undecodable ratio for the SEF Polar codes and a negligible penalty in the single sample download size.%
}

\revision{In Table \ref{figtable:numerical_comp}, we provide additional comparison of various performance metrics for 2D-RS, LCMT, and PCMT. We can see that PCMT outperforms LCMT w.r.t. IC-proof size and total sample download size with small increase in root size and decoding complexity. While PCMT has a $O(\lceil \log N_l \rceil)$ factor greater decoding complexity than an LCMT, the decoding complexity is smaller than for 2D-RS codes which is $O(N_l^{1.5})$ \cite{CMT}.
At the same time, PCMT also has a lower IC proof size and root size compared to 2D-RS codes while having a higher sample download size. Going from $\mathcal{T}_1$ to $\mathcal{T}_2$, the IC-proof size for 2D-RS codes increases 3 fold while the IC-proof size remains almost constant for LCMT and PCMT. Note that for 2D-RS codes, the IC proof size, decoding complexity, and header size do not scale well as the block size increases \cite{CMT}.}

 Overall, when the size of the transaction block $b$ is large, a PCMT built using SEF Polar codes has good performance w.r.t metrics i)-iv) described in Section~\ref{sec:prelims} and offers a new trade-off in these metrics compared to LCMT and 2D-RS codes.

\newpage

\appendices

\vspace{-0.1cm}
\section{Proof of Lemma \ref{lemma:last_few_frozen}\label{appendix:proof_last_few_frozen}}

First of all, it is easy to see that when all the VNs in $\tlastVN^{\tlast}_{\tNm}[1]$ (i.e., the VNs in the last $\tlast$ rows from the leftmost column of FG $\tFG_{\tNm}$) are frozen or set to zero chunks, all the VNs in the last $\tlast$ rows from all the columns will be zero chunks. This is because every CN in row $i$ of the FG is either connected to VNs in the same row $i$ or to VNs from a row below $i$ (i.e., having a row index greater than $i$) in the FG. This proves the second claim of the lemma. 

For the first claim, let $\tssingle \in \tSS^{\tIm}$ and let $\tFG^{\tssingle}_{\tNm}$ be the induced subgraph of $\tFG_{\tNm}$ corresponding to the set of VNs $\tssingle$. From the definition of $\tSS^{\tIm}$,  $\tssingle$ does not have any frozen VNs from the leftmost column of the FG $\tFG_{\tNm}$, i.e., $\tssingle$ does not have any VNs in the set $\{v_{1i} \;\vert \; i \in \tFm\}$. Since  $[\tNm - \tlast +1, \tNm] \subset \tFm$, $\tssingle$ does not have any VNs in $\tlastVN^{\tlast}_{\tNm}[1]$. 

We prove the first claim of the lemma by contradiction. Assume that $\tssingle$ has a VN from $\tlastVN^{\tlast}_{\tNm}[\log \tNm + 1]$. In particular, assume that $v_{(n+1)i_1} \in \tssingle$, where $n = \log \tNm$ and $\closed{i_1}{\tNm -\tlast+1}{\tNm}$.
Now, by the property of stopping sets, $c_{ni_1} \in \tFG^{\tssingle}_{\tNm}$. Now, to satisfy the stopping set property, either $v_{ni_1} \in  \tssingle$ or $v_{ni_2} \in \tssingle$ where
$i_1 < i_2 \leq \tNm$ and $v_{ni_2}$ and $c_{ni_1}$ are connected in $\tFG_{\tNm}$. Thus, for the column number $n$, we have at least one index ${i}$, $\closed{i}{\tNm - \tlast +1 }{\tNm}$  such that $v_{ni} \in \tssingle$. Proceeding in a similar manner as above, for the column number $(n-1)$, we have at least one index ${i}$, $\closed{i}{\tNm - \tlast +1 }{\tNm}$  such that $v_{(n-1)i} \in \tssingle$. Repeating the same process until we reach the leftmost column, we can find at least one index ${i}$, $\closed{i}{\tNm - \tlast +1 }{\tNm}$  such that $v_{1i} \in \tssingle$. However, this is a contradiction of the fact that $\tssingle$ does not have any VNs in set $\tlastVN^{\tlast}_{\tNm}[1] = \{v_{1i} \;|\; \closed{i}{\tNm - \tlast +1 }{\tNm}\}$. 

\section{Proof of Lemma \ref{lemma:undecodable_threshold}\label{appendix:undecodable_threshold}}
 The SEF algorithm produces a $(N,K)$ polar code with a FG $\tFG_N$ where the bottom $\tlast_2$ VNs from the leftmost column of $\tFG_N$ are frozen. Moreover, $\tFG_N$ is obtained from freezing (and hence removing) the last $\tNm - N$ rows of $\tFG_{\tNm}$, where $\tNm = 2^{\lceil \log N \rceil}$.  Note that $\tF$ is the output of the SEF algorithm and $\tI = [N]\setminus\tF$.
 Define $\tFm = \tF \cup [N + 1, \tNm]$, $\tIm = [\tNm]\setminus \tFm$. Clearly, $\tIm$ and $\tI$ are the same sets. Thus, the $(N,K)$ polar code can be seen as a code defined on the FG $\tFG_{\tNm}$ with frozen index set $\tFm$, and information index set $\tI$, where only the top $N$ VNs from the rightmost column of $\tFG_{\tNm}$ are the coded symbols. 
 
 Due to Lemma \ref{lemma:last_few_frozen}, VNs in the last $\tlast_2 + \tNm - N$ rows in the rightmost column of $\tFG_{\tNm}$ are not part of any stopping set in $\tSS^{\tI}$. This implies that VNs in the last $\tlast_2$ rows in the rightmost column of $\tFG_N$ are not part of any stopping set in $\tSS^{\tI}$. 
 Thus in the FG $\tFG_{N}$, the light nodes do not need to sample the VNs in the last $\tlast_2$ rows in the rightmost column of $\tFG_N$ and only randomly sample the top $N - \tlast_2$ VNs. Moreover, %
 from Eqn. \eqref{eqn:SStreeproperty} (applied on FG $\tFG_{\tNm}$), the smallest leaf set size of all stopping sets in $\tSS^{\tI}$ is given by
 \begin{align*}
     \min_{\tssingle \in \tSS^{\tI}}\vert\text{\tLS}(\tssingle)\vert = \min_{i \in \tI}f^{\tNm}_{i}
     = \min_{i \in \tI}\ttT_{\tNm}(i)
     = \min_{i \in \tI}\ttT_{N}(i).
 \end{align*}
 Hence, the probability of failure $P_f(s) = (1 - \frac{\min_{i \in \tI}\ttT_{N}(i)}{N - \tlast_2})^s$ resulting in an effective undecodable threshold of $\talpha^{SEF} = \frac{\min_{i \in \tI}\ttT_{N}(i)*N}{N - \tlast_2}$.

\setlength{\extrarowheight}{3pt}
\setlength{\tabcolsep}{2.8pt}
  \begin{table*}[t]
    \centering
\resizebox{!}{3.5cm}{
\begin{tabular}[b]{| L | M | N | O|}

\hline
& 2D-RS & LCMT  & PCMT  \\
\hline
Root size & $2y\lceil\sqrt{N_l}\rceil$ &$y N_1$ & $ yN_1(\lceil \log N_1\rceil+1)$\\
\hline
Single sample download size & $\frac{b}{k} + y\lceil \log \sqrt{N_l} \rceil$ & $\frac{b}{k} + y(2q-1)(l-1)$ & $\frac{b}{k} + y(2q-1)(l-1) + 2qy\sum_{j=1}^{l-1}\lceil \log N_j\rceil$\\
\hline
IC proof size &  $(\frac{b}{k} + y\lceil \log \sqrt{N_l} \rceil)\lceil \sqrt{k}\rceil$& $\frac{(d_c-1)b}{k} + d_cy(q-1)(l-1)$ & $ \frac{(d_p-1)b}{k} + d_py(q-1)(l-1) + d_pqy\sum_{j=1}^{l-1}\lceil \log N_j\rceil$\\
\hline
Decoding complexity & $O(N_l^{1.5})$ & $O(N_l)$& $O(N_l\lceil\log N_l\rceil)$\\
\hline
$\talpha$ & Analytical expression in \cite{dataAvailOrg} &NP-hard to compute & Lemma \ref{lemma:undecodable_threshold}\\
\hline
\end{tabular}
}  
    \caption{\footnotesize Comparison of various performance metrics of 2D-RS codes, an LCMT, and a PCMT. The LCMT and PCMT have the same $(k,R,q,l)$ parameters. The maximum degree of CNs  in the LDPC codes and polar FG are $d_c$ and $d_p = 3$, respectively. The size of the block is $b$. 2D-RS has $k$ data symbols and $\lceil \log \sqrt{N_l}\rceil$ layers in the Merkle tree where $N_l = \frac{k}{R}$.  For an LCMT, the number of layers $l$ can be calculated such that the root size is some fixed constant $t$. The same $l$ is used for the PCMT. 
    }\label{figtable:Performance_comp}
    \vspace{-15pt}
  \end{table*}
\section{Performance Analysis and  Comparison}\label{appendix:table_calculation}

Comparison of various performance metrics of an LCMT, a PCMT and 2D-RS codes is provided in Table I. The metrics for 2D-RS codes are calculated as described in \cite{dataAvailOrg}.
Detailed derivation of the root size, single sample download size, and IC proof size  for an LCMT and PCMT in Table I is as follows.

\subsubsection{Root size} For an LCMT, the root consists of the hashes of all the coded symbols in $L_1$. Hence, the root consists of $N_1$ hashes and, thus, has a size of $yN_1$. 
For a PCMT, the root consists of hashes of all the coded and dropped symbols in $L_1$, i.e., the hashes of all the VNs in FG of the polar code used in $L_1$. Hence, the root consists of $N_1(\lceil\log N_1 \rceil+1)$ hashes and, thus, has a size of $yN_1(\lceil\log N_1 \rceil+1)$.  
  
\subsubsection{Single sample download size} For an LCMT, as described in \cite{CMT, AceD, SSskew-full}, each sample request consists of a base layer symbol of the CMT and the Merkle proof of the base layer symbol. Moreover, the Merkle proof of the base layer symbol consists of a data symbol and a parity symbol from each layer above the base layer (i.e., layers $L_j$, $\closedN{j}{l-1}$
). The Merkle proof satisfies the property that the data symbol in proof from layer $L_j$ consists of the hash of the data symbol in proof from layer $L_{j+1}$, $\closedN{j}{l-1}$. Thus, of the $q$ hashes present in the data symbol of the Merkle proof from layer $L_j$, $\closedN{j}{l-1}$, the hash corresponding to the data symbol
of the Merkle proof from layer $L_{j + 1}$ is not communicated in the Merkle proof (and  it can be calculated by taking a hash of the data symbol of the Merkle proof from layer $L_{j + 1}$). Thus,
there are only $(q-1)$ hashes from each layer $L_j$, $\closedN{j}{l-1}$ for the data part in the Merkle proofs.  Thus, the size of the Merkle proof of a base layer symbol is $y(2q-1)(l-1)$ (since the size of each parity symbol is $yq$). Finally, the overall download size for a single sample request is $\frac{b}{k} + y(2q-1)(l-1)$, where $\frac{b}{k}$ is the size of the base layer symbol. 

For a PCMT, similar to an LCMT above,  a sample request consists of a base layer symbol of the PCMT and the Merkle proof of the base layer symbol. The Merkle proof of the base layer symbol  in this case again consists of a data symbol and a parity symbol from each layer above the base layer. Note that, here, the data and the parity symbols in layer $L_j$ are the non-dropped symbols of the polar FG $\tFG_{N_j}$ corresponding to the information $\tI_j$ and frozen set $\tF_j$ indices, respectively. In contrast to an LCMT, in a PCMT, each data symbol of  layer $L_j$, $\closedN{j}{l-1}$ consists of $q_j = q (\lceil \log N_j \rceil +1)$ hashes. 

The Merkle proof in a PCMT also satisfies the property  that the data symbol in proof from layer $L_j$ consists of the hash of the data symbol in proof from layer $L_{j+1}$, $\closedN{j}{l-1}$. Thus, only $(q_j-1)$ hashes from  layers $L_j$,  $\closedN{j}{l-1}$ are present in the data part of the Merkle proofs. Thus, for a PCMT,  the size of the Merkle proof of a base layer symbol is
\begin{align*}
    \sum_{j= 1}^{l-1}y(2q_j-1) &= \sum_{j=1}^{l-1}y(2(q (\lceil \log N_j \rceil +1))-1)\\& = y (2q-1)(l-1) + 2qy\sum_{j=1}^{l-1}\lceil \log N_j \rceil
\end{align*}

Finally, the overall download size for a single sample request is $\frac{b}{k} + y(2q-1)(l-1) + 2qy\sum_{j=1}^{l-1}\lceil \log N_j \rceil$, where $\frac{b}{k}$ is the size of the base layer symbol in a PCMT. 
Note the additional penalty factor of $2qy\sum_{j=1}^{l-1}\lceil \log N_j \rceil$ in the single sample download size
for a PCMT compared to an LCMT. However, for large block sizes when $\frac{b}{k}$ is large compared to $y$, the penalty is small.

\subsubsection{IC proof size}
As described in Section \ref{sec:prelims}, the IC proof for a failed parity check equation with $d$ symbols consists of $d-1$ symbols and the Merkle proofs of the $d$ symbols. Note that the proof size is largest for a failed parity check equation in the base layer. Thus, we provide the IC proof size when the $d$ symbols are base layer symbols.   Also note that, in IC proofs, the Merkle proof of a symbol only consists on the data symbols from each layer above the base layer (the parity symbols included in the Merkle proofs of the light node samples are only to get additional samples of the intermediate layers) \cite{CMT,AceD}. Thus, in an LCMT, for a failed parity check equation with $d$ symbols, the size of the IC proof is $\frac{(d-1)b}{k}  + dy(q-1)(l-1)$, where the $(q-1)$ term arises due to the same reason as explained in the single sample download size calculation. Hence, when the maximum CN degree of the LDPC code is $d_c$, the IC proof size becomes  $\frac{(d_c-1)b}{k}  + d_cy(q-1)(l-1)$.

For a PCMT, the IC proof for a failed parity check equation with $d$ symbols again consists of $d-1$ symbols and the Merkle proofs of the $d$ symbols. Note that the symbols here can be both the dropped or non-dropped symbols of the polar FG. Also, each data symbol of  layer $L_j$, $\closedN{j}{l-1}$ in a PCMT consists of $q_j = q (\lceil \log N_j \rceil +1)$ hashes. Thus, for a failed parity check equation with $d$ symbols, the size of the IC proof in a PCMT is 
\begin{align*}
    \frac{(d-1)b}{k}  &+ \sum_{j=1}^{l-1}dy(q_j -1)\\ &= \frac{(d-1)b}{k} + \sum_{j=1}^{l-1}dy((q (\lceil \log N_j \rceil +1)) -1)\\
    &= \frac{(d-1)b}{k} + dy(q-1)(l-1) + dyq\sum_{j=1}^{l-1}\lceil \log N_j \rceil.
\end{align*}

For a PCMT, the maximum CN degree for a CN in the FG of the polar code is $d_p = 3$. In this case, the IC proof size becomes  $\frac{(d_p-1)b}{k} + d_py(q-1)(l-1) + d_pyq\sum_{j=1}^{l-1}\lceil \log N_j \rceil$.

\subsection{Undecodable Threshold}

The undecodable threshold $\talpha$ for an LCMT with the LDPC codes described in Section \ref{sec:sims} and a PCMT with SEF Polar codes is shown in Fig. \ref{fig:P_f_undecodable_threshold}.
From the figure, we see that a PCMT has a higher $\talpha$ compared to an LCMT for different values of rates $R$ and code lengths $N$. Hence for large block size $b$, when the penalty in the single sample download size for a PCMT is small, they have a lower probability of failure compared to an LCMT for the same total sample download size. This result is shown in Fig. \ref{fig:IC_proof_P_f} right panel.
\begin{figure}[t]
    \centering
\begin{minipage}{0.99\linewidth}
 \centering
\begin{tikzpicture}
  \node (img)
  {\includegraphics[scale=0.3]{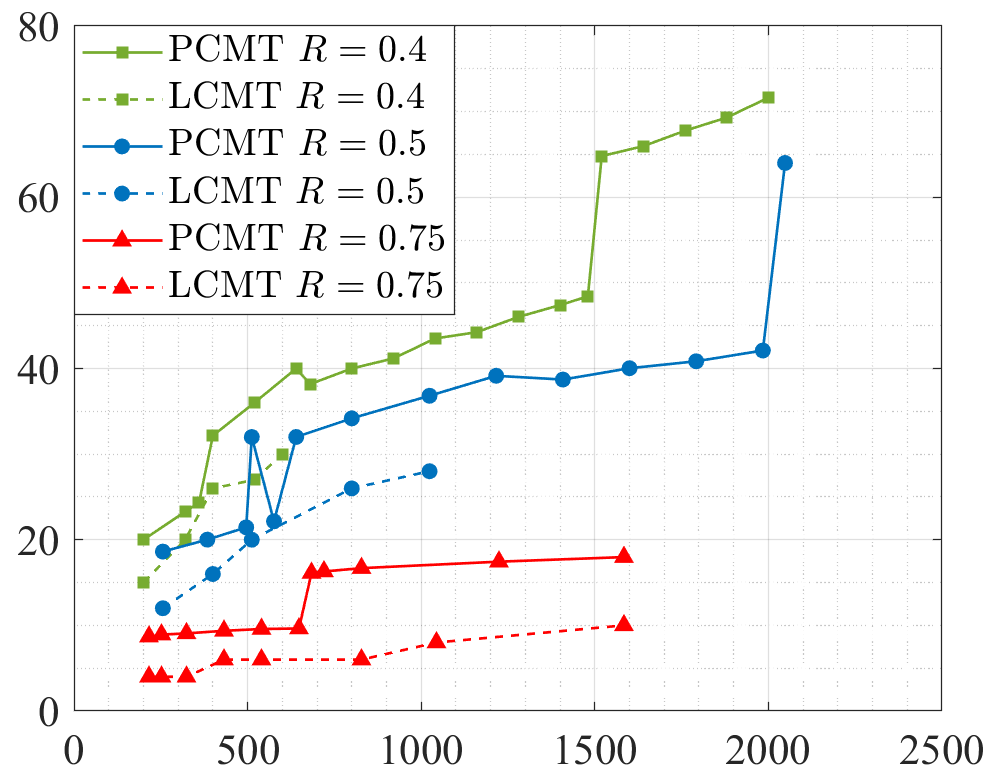}};
  \node[below=of img, node distance=0cm, yshift=1.2cm,font=\color{black}] {$N$};
  \node[left=of img, node distance=0cm, rotate=90, anchor=center,yshift=-1cm,font=\color{black}] {$\talpha$};
 \end{tikzpicture}
 \end{minipage}
    \vspace{-5pt}
    \caption{\footnotesize Left Panel: Undecodable threshold $\talpha$ vs $N$ for PEG LDPC codes (with VN degree 3) and SEF Polar codes. For PEG LDPC codes, we only list $\talpha$ that we were able to find by solving the ILP in \cite{ILPsearch}. 
      }
    \label{fig:P_f_undecodable_threshold}
\end{figure}

\section{Systematic encoding of Polar codes }\label{appendix:PDE}

 Systematic encoding of an $(\tNm,\tkm)$ polar codes can be performed using the method described in \cite{SystematicPOlarCodestwostatge}. It operates on the FG $\tFG_{\tNm}$ of the code. Let $\tIm \subset [\tNm]$ and $\tFm = [\tNm] \setminus \tIm$ be the index sets corresponding to the data and frozen symbols of the $(\tNm,\tkm)$ polar code, respectively, where $\vert \tIm \vert = \tkm$. Also, let $\tNm = 2^n$. The systematic encoding is performed on the FG $\tFG_{\tNm}$ by i) placing the $\tkm$ data symbols at the VNs $\{v_{(n+1)i} \vert i \in \tIm\}$ (in the rightmost column) and setting the VNs at $\{v_{1i} \vert i \in \tFm\}$ (in leftmost column) to zero symbols; ii) determining the rest of the VNs using the check constraints of the FG $\tFG_{\tNm}$ in a two stage reverse and forward encoding on the FG \cite{SystematicPOlarCodestwostatge}. The coded symbols are the VNs $v_{(n+1)i}, \closedN{i}{\tNm},$ and  are by design systematic.
Systematic encoding can also be performed using a peeling decoder as the encoder as described next.

In \cite{SystematicPOlarCodes}, author proposed to perform systematic encoding of polar codes by using a successive cancellation decoder on the code FG as the encoder. Inspired by this idea, we show that systematic encoding of the polar codes can also be performed using a peeling decoder as the encoder. The encoder, which we call a peeling encoder for polar codes (PEPC), works as follows.
 Consider the data and frozen index sets $\tIm \subset [\tNm]$ and $\tFm = [\tNm] \setminus \tIm$ of the polar code with FG $\tFG_{\tNm}$. Place the data symbols at the VNs $\{v_{(n+1)i} \vert i \in \tIm\}$ (in the rightmost column) and set the VNs at $\{v_{1i} \vert i \in \tFm\}$ (in leftmost column) to zero symbols. Use a peeling decoder to find the remaining VNs of the FG. The coded symbols $c_i = v_{(n+1)i}, \closedN{i}{\tNm},$ are by design systematic. We have the following lemma corresponding to a PEPC.

\begin{lemma}\label{lemma:PDE_sucessfull}
Systematic encoding of polar codes using a PEPC always results in a valid codeword. In other words, the peeling decoder never encounters a decoding failure when used for encoding. 
\end{lemma}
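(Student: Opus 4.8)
The plan is to show that the peeling decoder, when run on the factor graph $\tFG_{\tNm}$ with the $\tkm$ information VNs in the rightmost column and the $(1-R)\tNm$ frozen VNs in the leftmost column pinned to their values, can always peel off one more unknown VN until the whole graph is determined. The key structural fact I would exploit is that every CN in $\tFG_{\tNm}$ has degree $2$ or $3$, and more importantly that the CNs at column $k$ connect VNs in column $k$ to VNs in column $k+1$ in the characteristic polar ``butterfly'' pattern: each column-$k$ CN touches exactly one new VN in column $k$ (or $k+1$) beyond what is shared. I would formalize the peeling process column-by-column rather than trying to argue about the whole graph at once.

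First I would set up the invariant: proceed in two sweeps exactly mirroring the two-stage systematic encoder of \cite{SystematicPOlarCodestwostatge}. In the backward sweep, starting from the known rightmost column $n+1$ and the known frozen entries of the leftmost column $1$, I claim the peeling decoder can resolve all VNs of column $n$, then column $n-1$, and so on: at each CN connecting columns $k$ and $k+1$, once the column-$(k{+}1)$ neighbors and the appropriate frozen/known column-$k$ neighbor are filled, the remaining single unknown VN is a degree-one residual and gets peeled. I would verify that the bipartite structure of each ``Z-section'' of the polar graph guarantees that after knowing column $k+1$ and the frozen positions, there is always at least one CN with exactly one unknown neighbor — this is precisely where the recursive $\tP_2 = \left[\begin{smallmatrix}1&0\\1&1\end{smallmatrix}\right]$ structure is used: one of the two output branches of each length-2 kernel is a direct copy and the other is a sum, so one XOR constraint always has a unique unknown. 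Then a forward sweep from column $1$ to column $n+1$ finishes determining any VNs not pinned down in the backward sweep, again using degree-one residuals at each stage.

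The main obstacle I anticipate is making the column-by-column ordering argument airtight: I must show that the peeling decoder, which greedily removes \emph{any} degree-one CN, cannot get ``stuck'' in a configuration where every remaining CN has two or more unknown neighbors. The clean way is to argue that the ordering produced by the reverse-then-forward systematic encoder of \cite{SystematicPOlarCodestwostatge} is \emph{itself} a valid peeling schedule — i.e., each VN that encoder computes is, at the moment it is computed, the unique unknown in some check equation — and then invoke the standard fact that if any valid peeling schedule exists, the greedy peeling decoder succeeds (it never erases information, and a stuck state would contradict the existence of the schedule). This reduces the lemma to a property of the known encoder, which is the routine part; the only real content is exhibiting that encoder's computation order as a sequence of degree-one reductions on $\tFG_{\tNm}$. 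Finally, I would note that since the VNs corresponding to $\tdropped$ and $\tttC$ together are exactly all VNs of $\tFG_{\tNm}$, successful peeling determines the entire (systematic) codeword, so no decoding failure ever occurs.
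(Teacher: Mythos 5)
Your reduction to ``exhibit some valid peeling schedule, then invoke order-invariance of the peeling decoder'' is a legitimate strategy, but the schedule you propose does not exist as described, and the step you defer as routine is in fact the entire content of the lemma. The backward sweep cannot ``resolve all VNs of column $n$, then column $n-1$, and so on'': the rightmost column is only known at the information positions $\tIm$ (the parity positions $\tFm$ of column $n+1$ are precisely what encoding must produce), so no full column is available to start from, and a degree-3 CN between columns $n$ and $n+1$ with only one known neighbor cannot be peeled. Already for $\tNm=2$ with $\tIm=\{1\}$ the first peelable check propagates left-to-right (recovering $v_{2,2}$ from the frozen $v_{1,2}$) before anything moves right-to-left, and for larger $\tNm$ the order in which single-unknown checks appear interleaves across columns in a way governed by the recursive structure and by the particular pair $(\tIm,\tFm)$. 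Showing that the two-stage encoder of \cite{SystematicPOlarCodestwostatge} is itself a sequence of degree-one reductions therefore requires a genuine induction on that recursive structure; asserting it leaves the proof incomplete.

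The paper avoids scheduling altogether with a structural argument: it first proves (Lemma~\ref{lemma:full_row}) that \emph{every} stopping set of the polar FG must contain a full row of VNs, i.e., one VN from every column along some row $i$. Since $\tIm\cup\tFm=[\tNm]$ partitions the rows, every row has at least one initially known VN ($v_{(n+1)i}$ if $i\in\tIm$, $v_{1i}$ if $i\in\tFm$), so the erased set contains no full row and hence no stopping set; the peeling decoder therefore terminates with everything resolved. If you want to salvage your route, you would need to replace the two-sweep description with an explicit inductive construction of a peeling order on the recursive decomposition of $\tFG_{\tNm}$, which is comparable in effort to, and less reusable than, the full-row lemma.
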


We prove Lemma \ref{lemma:PDE_sucessfull} by proving the following important property of stopping sets in the FG of polar codes produced by the SEF algorithm. 
Note that the property holds true for regular polar FG $\tFG_{\tNm}$ where $\tNm$ is a power of 2. Additionally, the property also holds true for the FG $\tFG_{N}$ obtained by removing VNs from the last few rows of FG of the form $\tFG_{\tNm}$, where $\tNm$ is a power of 2. Thus, let $n = \lceil \log N \rceil$. The proof provides insights on important properties of stopping sets in the FG of polar codes. To our best knowledge, we have not seen the following result before in literature and, hence, it may be of independent interest. 

\begin{lemma}\label{lemma:full_row}
 Consider a polar FG $\tFG_{N}$ produced by the SEF algorithm (this encompasses  polar factor graphs $\tFG_{\tNm}$, where $\tNm$ is a power of 2). Every stopping set of $\tFG_{N}$ has a full row of VNs i.e., it has all VNs in the set $\{v_{ki} \;\vert \; \closedN{k}{\;\lceil \log N \rceil + 1\;}  \}$ for some $ \closedN{i}{N}$. 
\end{lemma}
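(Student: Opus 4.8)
The plan is to induct on $n = \lceil \log N \rceil$, exploiting the recursive Kronecker structure of the polar factor graph. For $N$ a power of two the graph $\tFG_{N}$ decomposes: deleting the leftmost CN column splits $\tFG_{2^n}$ into two copies of $\tFG_{2^{n-1}}$ (the ``top half'' on rows $1,\dots,2^{n-1}$ and the ``bottom half'' on rows $2^{n-1}+1,\dots,2^n$), and the leftmost CN column connects $v_{1i}$ and $v_{1,i+2^{n-1}}$ into $v_{2,i}$ (the degree-2 and degree-3 CNs of the $\tP_2$ kernel). For $N$ not a power of two, $\tFG_N$ is $\tFG_{2^n}$ with the bottom $2^n - N$ rows (all-zero chunks) removed by Lemma \ref{lemma:last_few_frozen}, so any stopping set of $\tFG_N$ is also a stopping set of $\tFG_{2^n}$ living entirely in the top $N$ rows; hence it suffices to prove the claim for $\tFG_{2^n}$, and I would state this reduction first.

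Next I would set up the inductive step. Let $\tssingle$ be a stopping set of $\tFG_{2^n}$ and let $\tssingle'$ (resp. $\tssingle''$) be its restriction to the top (resp. bottom) half after deleting the leftmost VN and CN columns. The key observation is that $\tssingle'$ and $\tssingle''$ are each stopping sets (possibly empty) of $\tFG_{2^{n-1}}$: a CN in columns $2,\dots,n+1$ of $\tFG_{2^n}$ lies entirely in one half and retains its neighbourhood, so the stopping-set property is inherited. Then I would use the leftmost-column constraint to relate $\tssingle'$, $\tssingle''$, and the leftmost VNs: if $v_{1i} \in \tssingle$ then its two incident CNs $c_{1i}$ (connecting to $v_{2i}$) and possibly $c_{1,i-2^{n-1}}$ force corresponding VNs in column 2 to be in $\tssingle$, which feed into $\tssingle'$ or $\tssingle''$. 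The goal is to show $\tssingle$ contains a ``full row'' — all of $v_{1i}, v_{2i}, \dots, v_{(n+1)i}$ for some $i$. I would argue that at least one of $\tssingle'$, $\tssingle''$ is nonempty (a stopping set is nonempty by convention, and if both halves were empty the leftmost VNs would have to form a stopping set of the leftmost CN column alone, which is impossible since those CNs have degree $\le 3$ and the argument of Lemma \ref{lemma:last_few_frozen} applies), apply the induction hypothesis to get a full row in $\tFG_{2^{n-1}}$ at some row index $r$, i.e. $v_{2i},\dots,v_{(n+1)i} \in \tssingle$ where $i$ is the corresponding index in $\tFG_{2^n}$, and then show this full row ``extends left'': the CN(s) in the leftmost column incident to $v_{2i}$ force either $v_{1i}$ or $v_{1,i'} $ (its partner) into $\tssingle$; if it forces the partner instead, I track it into the other half and repeat. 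A cleaner route may be to invoke the stopping-tree structure: by \cite{Polar-SStree-TCOM} every $v_{1i}$ roots a unique stopping tree $\tssingletree^{2^n}_i$, and I would show that $\tssingle \supseteq \tssingletree^{2^n}_i$ for any leftmost VN $v_{1i} \in \tssingle$, while conversely $\tssingle$ must contain some leftmost VN (by propagating a nonempty column-2 restriction leftward using the degree-$\le 3$ CNs). Then the column-by-column path from $v_{1i}$ to $\text{\tLS}(\tssingletree^{2^n}_i)$ inside the stopping tree supplies the desired full row.

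The main obstacle I anticipate is the ``extend left'' / ``must contain a leftmost VN'' step: showing that a nonempty stopping set, after stripping the leftmost column, cannot avoid having at least one VN in column 1, and that the full row produced by induction in a half-graph aligns (same row index $i$) with a leftmost VN rather than getting shunted to the partner index $i \pm 2^{n-1}$ indefinitely. Handling the two CN types of the $\tP_2$ kernel (the degree-2 CN only touching $v_{1i}$ and $v_{2i}$, versus the degree-3 CN touching $v_{1i}, v_{1,i+2^{n-1}}, v_{2,i+2^{n-1}}$) carefully in this alignment argument is where the real bookkeeping lies; I would isolate it as a sub-claim about how a full row in one half forces a full row (or a leftmost VN) via these kernel CNs, and prove that sub-claim by direct case analysis on which CN is incident.
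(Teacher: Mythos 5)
Your inductive strategy is genuinely different from the paper's argument and could in principle be made to work, but as written it is not yet a proof: the step you yourself flag as ``where the real bookkeeping lies'' --- showing that the full row produced by the induction hypothesis in a half-graph actually extends into column~1 of $\tFG_{2^n}$ rather than being shunted to the partner index --- is precisely the content of the lemma at the recursive boundary, and you leave it as an unproved sub-claim. It can be closed: a full row $v_{2j},\dots,v_{(n+1)j}$ with $j$ in the bottom half extends left automatically, because $v_{2j}$ there meets only the degree-2 kernel CN, whose other neighbour is $v_{1j}$; a full row with $i$ in the top half meets the degree-3 kernel CN on $\{v_{1i},v_{1,i+2^{n-1}},v_{2i}\}$, and if that CN is satisfied by the partner $v_{1,i+2^{n-1}}$ rather than $v_{1i}$, the partner's degree-2 CN forces $v_{2,i+2^{n-1}}$ into the set, so the bottom-half restriction is nonempty and the first case applies. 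None of this case analysis appears in your write-up, and it is the whole difficulty. Your proposed ``cleaner route'' is also unsound as stated: it is not true that $\tssingle \supseteq \tssingletree^{2^n}_i$ whenever $v_{1i}\in\tssingle$, since a degree-3 CN incident to the tree may be satisfied by a neighbour outside the tree; the stopping tree is the unique stopping set containing exactly one leftmost VN, not a subset of every stopping set through its root.

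For comparison, the paper's proof is a short extremal argument with no recursion. It classifies the edges of $\tFG_N$ as horizontal or slanted and uses the fact that a slanted edge always joins a CN in row $i$ to a degree-3 VN in a row strictly below $i$. If the induced subgraph of $\tssingle$ contains no slanted edge, every incident CN must be satisfied by both of its horizontal neighbours, and the row of any VN of $\tssingle$ propagates to a full row. Otherwise, let $i_{\max}$ be the largest row index of a degree-3 VN in $\tssingle$; a CN in row $i_{\max}$ incident to $\tssingle$ cannot obtain its second neighbour through a slanted edge (that would yield a degree-3 VN of $\tssingle$ in a row below $i_{\max}$), so both horizontal neighbours must lie in $\tssingle$, and row $i_{\max}$ fills in completely. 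This avoids the decomposition bookkeeping and the alignment issue entirely, and it applies verbatim to the non-power-of-two graphs output by SEF, so even your (correct) reduction of general $N$ to $\tFG_{2^{\lceil \log N\rceil}}$ is unnecessary there.
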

\begin{proof}
Let $\tssingle$ be a stopping set of $\tFG_{N}$. Let $\tFG^{\tssingle}_N$ be the induced subgraph of $\tFG_{N}$ corresponding to the set of VNs $\tssingle$.  Observe that the FG $\tFG_N$ has two types of edges (see Fig. \ref{fig:Factor_graph} for an example): horizontal edges and slanted edges (which involves a connection between a degree 3 VN and a degree 3 CN). We consider two cases: i) $\tFG^{\tssingle}_N$ does not have any slanted edges;
ii) $\tFG^{\tssingle}_N$  has at least one slanted edge.

For case i), it is easy to see that the stopping set $\tssingle$ must include a full row of VNs. For case ii), note that it implies $\tFG^{\tssingle}_N$  has at least one slanted edge. This implies that $\tssingle$ has at least one VN of degree 3. In this situation, define the set 
$\Delta_{\tssingle} = \{(i,k) \;\vert\; \closedN{i}{N},\; \closedN{k}{n}, v_{ki} \in \tssingle, \text{degree of } v_{ki} = 3 \}$. Let $i_{\max} = \max(\{i \vert (i,k) \in \Delta_{\tssingle} \text{ for some } k,\closedN{k}{n}\})$.
$\Delta_{\tssingle}$ contains the indices of all the degree 3 VNs of $\tssingle$ and $i_{\max}$ denotes the largest row index such that $\tssingle$ has a degree 3 VN from that row. Note that  $\Delta_{\tssingle}$ is non empty because we are considering case ii). Now, we claim that $\tssingle$ has all the VNs in the row $i_{\max}$, i.e., $\tssingle$  contains all the VNs in $\{v_{ki_{\max}}\;\vert\;\closedN{k}{n+1}\}$.  To see why this is true, let $\thickbar{k}$, $\closedN{\thickbar{k}}{n}$, be such that $(i_{\max}, \thickbar{k}) \in \Delta_{\tssingle}$. This means that $v_{\thickbar{k}i_{\max}} \in \tssingle$. Now by the definition of a stopping set, the CNs to the right and left of $v_{\thickbar{k}i_{\max}}$ must belong to the induced subgraph graph of the stopping set. In other words,  $c_{(\thickbar{k}-1)i_{\max}} \in \tFG^{\tssingle}_N$ and  $c_{\thickbar{k}i_{\max}} \in \tFG^{\tssingle}_N$ (unless $v_{\thickbar{k}i_{\max}}$ is the rightmost or the left most VN in which case we will have only one CN neighbour). Now, to satisfy the stopping set property, for both these CNs, their corresponding VNs to their left and right in the same row  $i_{\max}$ must belong to the stopping set $\tssingle$ (i.e., $v_{(\thickbar{k}-1)i_{\max}} \in \tssingle$, $v_{(\thickbar{k}+1)i_{\max}} \in \tssingle$). If not, then to satisfy the stopping set property, the CN must be connected to a VN $v_{ki}\in \tssingle$ by a slanted edge. Note that a slanted edge connects a CN to a degree 3 VN in lower row. In other words,  a slanted edge connects a CN from row $i_{\max}$ to a degree 3 VN in a row with index greater than $i_{\max}$. This condition violates the definition of $i_{\max}$. Thus, $v_{(\thickbar{k}-1)i_{\max}} \in \tssingle$ and $v_{(\thickbar{k}+1)i_{\max}} \in \tssingle$. 
Now, considering $v_{(\thickbar{k}-1)i_{\max}}$ and $v_{(\thickbar{k}+1)i_{\max}}$ as the starting VN (similar to $v_{\thickbar{k}i_{\max}}$), we can apply the above logic to show that $v_{(\thickbar{k}-2)i_{\max}} \in \tssingle$ and $v_{(\thickbar{k}+2)i_{\max}}\in \tssingle$. Repeatedly applying the same argument, we can show that all the VNs in $\{v_{ki_{\max}}\;\vert\;\closedN{k}{n+1}\}$ belong to $\tssingle$, where $n = \lceil \log N \rceil$.

\end{proof}

\vspace{-0.2cm}
According to the above lemma, every stopping set has at least one full row of VNs, i.e., it has VNs from all columns along a single row. We use this property to prove Lemma \ref{lemma:PDE_sucessfull}. \\[-2mm]

\noindent
\emph{Proof of Lemma \ref{lemma:PDE_sucessfull}}:\\
Let $\mathcal{V}$ be the set of all VNs in the polar FG $\tFG_N$. Also, let $\mathcal{\overline{V}}_{e} = \{v_{(n+1)i}\;\vert\;i\in \tI\} \cup \{v_{1i}\;\vert\;i\in \tF\}$ and $\mathcal{V}_e = \mathcal{V} \setminus \mathcal{\overline{V}}_{e}$. VNs in $\mathcal{V}_e$ are uninitialized or initially erased by the PEPC and are determined using a peeling decoder (for the purposes of encoding). 
Thus, the PEPC can fail if $\mathcal{V}_e$ contains a stopping set of $\tFG_N$. However, since $\tI \cup \tF = [N]$ (i.e., they form a partition of all the row indices), for all $\closedN{i}{N}$ either $v_{(n+1)i} \in \mathcal{\overline{V}}_{e}$ or $v_{1i} \in \mathcal{\overline{V}}_{e}$. Thus, for all $\closedN{i}{N}$, $v_{(n+1)i}$ and $v_{1i}$ both cannot simultaneously belong to $\mathcal{V}_e$. Hence, $\mathcal{V}_e$ cannot contain a full row of VNs. Using Lemma \ref{lemma:full_row}, we conclude that $\mathcal{V}_e$ cannot contain a stopping set. Thus, the PEPC will always be successful and will result in a valid codeword.

\end{document}